\newcommand{\clr}{\color{black}}
\newtheorem{thm}{Theorem}
\newtheorem{cor}{Corollary}
\newtheorem{prop}{Proposition}
\newtheorem{defi}{Definition}
\newcounter{myalgctr}
\numberwithin{myalgctr}{section}
\DeclareMathOperator*{\argmin}{arg\,min}
\title{\bf Exchangeability, Conformal Prediction,\\ and Rank Tests}
\author{Arun Kumar Kuchibhotla}
\affil{\texttt{arunku@cmu.edu}}
\affil{Department of Statistics \& Data Science, Carnegie Mellon University}
\date{}
\begin{document}

\maketitle

\begin{abstract}
Conformal prediction has been a very popular method of distribution-free predictive inference in recent years in machine learning and statistics. Its popularity stems from the fact that it works as a wrapper around any prediction algorithm such as neural networks or random forests. Exchangeability is at the core of the validity of conformal prediction. The concept of exchangeability is also at the core of rank tests widely known in nonparametric statistics. In this paper, we review the concept of exchangeability and discuss the implications for conformal prediction and rank tests. {\clr We provide a low-level introduction to these topics, and discuss the similarities between conformal prediction and rank tests.} %The paper is not meant to be a review or a survey.} {\color{blue}[PREVIOUS VERSION$\rightarrow$] Although written partly as an exposition, the main message here is to show that similar to conformal prediction, non-parametric rank tests can also be used as a wrapper around any dimension reduction algorithm.}
\end{abstract}
% \end{frontmatter}
\section{Introduction}\label{sec:intro}
Exchangeability of random variables is one of the fundamental concepts in statistics, probably right next to the concept of independent and identically distributed (i.i.d.) random variables. Although these two concepts are very closely related, the fact that exchangeability allows for a specific type of dependence between the random variables leads to numerous implications/applications of this concept. One of the most important implications of exchangeability is that the indexing of random variables is immaterial. In technical words, this means that the ranks of real-valued exchangeable random variables are uniform over the set of all permutations. Just this one implication has led to the development of two very different fields in statistics and machine learning, namely, non-parametric rank tests and conformal prediction. 

{\clr Conformal prediction fills an important gap in machine learning (prediction) algorithms and forecasting. For example, in the context of regression, classical algorithms only provide the point prediction for the response without any uncertainty quantification. Conformal prediction intervals centered at the point prediction algorithm act as such uncertainty measure. Further, classical prediction intervals such as those in linear regression are based on well-specified linear model assumptions. Conformal methods yield prediction regions without any such distributional assumptions.}

{\clr On the other hand, rank tests concern the statistical problem of testing hypotheses. Most commonly used hypothesis testing procedures asymptotically control the type I error and depend on certain distributional assumptions so as to ensure ``good'' asymptotic properties of the test statistic. Rank tests, when available, are finite sample valid and are distribution-free.}

The main purpose of this article is to define exchangeability, discuss its implications, and then exposit the uses of this concept for conformal prediction and rank tests. 
{\clr  
Both conformal prediction and rank tests make significant use of exchangeability to yield finite sample guarantees for prediction regions and type I error control, respectively, without any distributional assumptions. Conformal prediction regions for data in arbitrary dimensions has been well discussed in the literature, while rank tests for arbitrary dimensions is not as widely discussed. This is not to say that rank tests for multivariate or high-dimensional cases are unknown; see~\cite{friedman2003multivariate},~\cite{vayatis2009auc} for some works.
}
% To our knowledge, conformal prediction (starting from~\cite{vovk2005algorithmic}) is the first field to apply the full strength of exchangeability. 
The popularity of conformal prediction stems from the fact that it can be wrapped around any arbitrary algorithm that provides point predictions and leads to a finite sample valid prediction regions for future observations. 

% Non-parametric rank tests, although being a seasoned topic in statistics, has not as well explored the implications of exchangeability in the way conformal prediction had. 
In this article, we show how non-parametric rank tests (usually defined for real-valued cases) can also be thought of as wrappers and be applied to arbitrary spaces. {\clr The idea of (data independent) dimension reduction for rank tests in arbitrary spaces trivially leads to a finite sample type I error control as mentioned in~\cite{matthews1996nonparametric} and~\cite{lheritier2015nonparametric}. Note that this also includes the case of transformations based on sample splitting as noted in~\cite{friedman2003multivariate} and~\cite{vayatis2009auc}. In this article, we show that the dimension reduction algorithm need not be independent of the data.}

Conformal prediction pioneered by~\cite{vovk2005algorithmic} was introduced to the statistics community by~\cite{MR3174619} and further explored in several works~\citep{lei2014distribution,lei2018distribution,chernozhukov2018exact,romano2019conformalized,barber2019limits,barber2019predictive,chernozhukov2019distributional,kuchibhotla2019nested}, among others. For a general overview of the topic, we refer the reader to~\cite{balasubramanian2014conformal}. The discussion in all of these papers starts with a ``conformity'' score. In this article, we do not formally define a ``conformity'' score but show the application of exchangeability and it is done also because similar thinking helps when we discuss rank tests.

The organization of the article is as follows. In Section~\ref{SEC:EXCHANGEABILITY-IMPLICATIONS}, we introduce the concept of exchangeability and discuss its implications for ranks of real-valued random variables. Exchangeability is a very intuitive concept that can make it hard to verify rigorously, in some cases. For this reason, in Section~\ref{subsec:transformations}, we discuss the issue of preserving exchangeability via transformations. In Section~\ref{SEC:CONFORMAL-PREDICTION}, we discuss the applications of the implication of exchangeability for the construction of distribution-free finite sample valid prediction regions. In Section~\ref{SEC:RANK-TESTS}, we discuss the applications of the implication of exchangeability for the construction of distribution-free finite-sample valid rank tests for testing equality of distributions as well as testing independence of two random variables. In both these tests of hypotheses, we allow the random variables to take values in an arbitrary space, thus showing the full strength of exchangeability for this application. In Section~\ref{sec:summary}, we summarize the article and discuss a few open questions.     

{\clr Most of the results in the article are either known or standard. All the results follow from the definition of exchangeability.}

\paragraph{Notation.} We use the following notation throughout the article. The notation $\overset{d}{=}$ represents the equality in distribution of two random variables. We abbreviate the set $\{1,2,\ldots,n\}$ by $[n]$ for any $n\ge1$. We write i.i.d. for independent and identically distributed.  
\section{Exchangeability and Implications}\label{SEC:EXCHANGEABILITY-IMPLICATIONS}
\subsection{Definition of Exchangeability}
Random variables $W_1, \ldots, W_n$ for $n\ge1$ are said to be exchangeable if
\begin{equation}\label{eq:exchangeability}
(W_1, \ldots, W_n) ~\overset{d}{=}~ (W_{\pi(1)}, \ldots, W_{\pi(n)}),
\end{equation}
for any permutation $\pi:[n]\rightarrow[n]$. Intuitively, exchangeability means that the index of the random variables is immaterial. If $W_1, \ldots, W_n$ are real-valued random variables, then the definition~\eqref{eq:exchangeability} is equivalent to the condition that $(W_1, \ldots, W_n)$ has the same (joint) cumulative distribution function as that of $(W_{\pi(1)}, \ldots, W_{\pi(n)})$, that is, for any $a_1, \ldots, a_n\in\mathbb{R}$, and any permutation $\pi:[n]\to[n]$,
\begin{equation}\label{eq:cumulative-distribution-function}
\mathbb{P}(W_1 \le a_1,\,\ldots,\,W_n \le a_n) ~=~ \mathbb{P}(W_{\pi(1)} \le a_1,\,\ldots,\, W_{\pi(n)} \le a_n).
\end{equation}
{\clr Here $\mathbb{P}(\cdot)$ represents the probability of the event with respect to the probability measure of $(W_1, \ldots, W_n)$.}
If $(W_1, \ldots, W_n)$ has a density $p(\cdot, \ldots, \cdot)$ with respect to the Lebesgue measure, then this condition is further equivalent to
\begin{equation}\label{eq:probability-density-function}
p(a_1,\,\ldots,\,a_n) ~=~ p(a_{\pi(1)},\,\ldots,\,a_{\pi(n)}),
\end{equation}
for any permutation $\pi:[n]\to[n]$ and any $a_1, \ldots, a_n\in\mathbb{R}$.
For random variable in an arbitrary measurable space $\mathcal{X}$, definition~\eqref{eq:exchangeability} is equivalent to
\begin{equation}\label{eq:general-exchangeability}
\mathbb{P}(W_1\in A_1,\,\ldots,\,W_n\in A_n) ~=~ \mathbb{P}(W_{\pi(1)} \in A_1,\,\ldots,\,W_{\pi(n)}\in A_{\pi(n)}),
\end{equation}
for any permutation $\pi:[n]\to[n]$ and any Borel measurable sets $A_1, \ldots, A_n$. A simple consequence of definition~\eqref{eq:general-exchangeability} is that exchangeable random variables must be identically distributed. To see this, fix a $j\in[n]$ and take the $\pi:[n]\to[n]$ such that $\pi(1) = j$. Choosing $A_2 = A_2 = \ldots = A_n = \mathcal{X}$ in~\eqref{eq:general-exchangeability} yields $\mathbb{P}(W_1 \in A_1) = \mathbb{P}(W_j \in A_1)$ and because $j\in[n]$ is arbitrary, the result follows. Hence, \emph{identical distributions is a necessary (but not a sufficient) condition for exchangeability.}

Further, it is not hard to verify using~\eqref{eq:general-exchangeability} that if $W_1, \ldots, W_n$ are independent and identically distributed (i.i.d.), then they are exchangeable.
\subsection{Examples and Counter-examples}\label{subsec:examples}
In the following, we provide a few examples of exchangeable random variables.
\begin{enumerate}
  \item Suppose $W_1, W_2$ have the joint distribution
  \begin{equation}\label{eq:bivariate-normal}
  \begin{pmatrix}W_1\\W_2\end{pmatrix} ~\sim~ N\left(\begin{pmatrix}0\\0\end{pmatrix},\,\begin{pmatrix}1 & \rho\\\rho & 1\end{pmatrix}\right).  
  \end{equation}
  For any $\rho \in [-1, 1]$, exchangeability of $W_1$ and $W_2$ can be verified readily using~\eqref{eq:probability-density-function}. Suppose the mean of the bivariate normal distribution in~\eqref{eq:bivariate-normal} is changed to $(\mu_1, \mu_2)^{\top}$. In this case, $W_1$ and $W_2$ are not exchangeable unless $\mu_1 \neq \mu_2$. This follows by noting that $W_1$ and $W_2$ do not have the same marginal distribution if $\mu_1 \neq \mu_2$. 
  \item Suppose $W_i = X_i + Z$ for i.i.d. random variables $X_i$ and another random variable $Z$ independent of $X_1, \ldots, X_n$. Then $W_1, \ldots, W_n$ are exchangeable. To prove this, note that
  \begin{align*}
  \mathbb{P}(W_1 \le w_1, \ldots, W_n \le w_n) &= \mathbb{E}[\mathbb{P}(W_1 \le w_1, \ldots, W_n \le w_n\big|Z)]\\
  &= \mathbb{E}[\mathbb{P}(X_1 \le w_1 - Z, \ldots, X_n \le w_n - Z\big|Z)]\\
  &\overset{(a)}{=} \mathbb{E}[\mathbb{P}(X_1 \le w_1 - Z\big|Z)\cdots\mathbb{P}(X_n \le w_n - Z\big|Z)]\\
  &\overset{(b)}{=} \mathbb{E}[P_X(w_1 - Z)\cdots P_X(w_n - Z)]\\
  &= \mathbb{P}(W_{\pi(1)} \le w_1, \ldots, W_{\pi(n)} \le w_n).
  \end{align*}
  Here (a) follows from the assumption that $X_1, \ldots, X_n$ are independent and (b) follows from the assumption that $X_i$'s are identically distributed with distribution function $P_X(\cdot)$. Finally, the last equality follows by retracing the steps with a permutation. {\clr Observe that the random variables $W_1, \ldots, W_n$ are not independent and their dependence stems from the common random variable $Z$. This shows that exchangeability in general does not imply independence.}
  \item Suppose $W_i = f(X_i, Z)$ for a function $f$, i.i.d. random variables $X_1, \ldots, X_n$ and another random variable $Z$ independent of $X_1, \ldots, X_n$. Then $W_1, \ldots, W_n$ are exchangeable. The proof is almost verbatim as in the previous case.
\end{enumerate}
\subsection{De Finetti's Theorem}
A commonality of the last two examples in Section~\ref{subsec:examples} is that there exists a random variable $Z$ conditional on which $W_1, \ldots, W_n$ are i.i.d.. This represents one of the most general ways of constructing exchangeable random variables. One of the most important results in Bayesian statistics states that if $n = \infty$, then there is no other way of constructing exchangeable random variables. Formally, {\clr we have the following De Finetti's representation theorem for infinite sequence of exchangeable random variables. The following statement is taken from~\citet[Theorem 1.49]{schervish2012theory}.
\begin{thm}[De Finetti's Representation Theorem]
Let $(S,\mathcal{A}, \mu)$ be a probability space, and let $(\mathcal{W}, \mathcal{B})$ be a Borel space. For each $n$, let $W_n: S\to\mathcal{X}$ be measurable. The sequence $\{W_i\}_{i=1}^{\infty}$ is exchangeable if and only if there is a random probability measure $\mathbf{P}$ on $(\mathcal{W}, \mathcal{B})$ such that, conditional on $\mathbf{P} = P$, $\{W_i\}_{i=1}^{\infty}$ are independent and identically distributed with distribution $P$. Furthermore, if the sequence is exchangeable, then the distribution of $\mathbf{P}$ is unique, and $n^{-1}\sum_{i=1}^n \mathbbm{1}\{W_i\in B\}$ converges to $\mathbf{P}(B)$ almost surely for each $B\in\mathcal{B}$.
\end{thm}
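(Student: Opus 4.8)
The plan is to prove the two implications separately, with essentially all of the difficulty residing in the ``only if'' direction. The ``if'' direction is a direct generalization of the third example of Section~\ref{subsec:examples}: if, conditional on $\mathbf{P}$, the variables are i.i.d.\ with law $\mathbf{P}$, then for any bounded measurable $f_1,\ldots,f_n$ and any permutation $\pi:[n]\to[n]$ I would condition on $\mathbf{P}$ and use that i.i.d.\ sequences are exchangeable to get
\[
\mathbb{E}\Big[\prod_{j=1}^n f_j(W_j)\Big] = \mathbb{E}\Big[\prod_{j=1}^n \Big(\int f_j\,d\mathbf{P}\Big)\Big] = \mathbb{E}\Big[\prod_{j=1}^n f_j(W_{\pi(j)})\Big],
\]
since each $W_{\pi(j)}$ also has conditional law $\mathbf{P}$. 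This forces permutation invariance of all finite-dimensional distributions, i.e.\ exchangeability, and is exactly the computation in Section~\ref{subsec:examples} with the common randomness $Z$ replaced by $\mathbf{P}$.

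For the converse, my plan has three stages. First I would \emph{construct} the candidate $\mathbf{P}$ as the limit of the empirical measures. Set $\mathbf{P}_n(B) = n^{-1}\sum_{i=1}^n \mathbbm{1}\{W_i\in B\}$ and let $\mathcal{E}_n$ be the $\sigma$-field generated by all functions of $(W_1,W_2,\ldots)$ that are symmetric in the first $n$ coordinates. These decrease, $\mathcal{E}_1\supseteq\mathcal{E}_2\supseteq\cdots$, and exchangeability gives $\mathbb{E}[\mathbbm{1}\{W_1\in B\}\mid\mathcal{E}_n] = \mathbf{P}_n(B)$, so $(\mathbf{P}_n(B))_{n\ge1}$ is a reverse martingale. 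The reverse martingale convergence theorem then yields $\mathbf{P}_n(B)\to\mathbf{P}(B):=\mathbb{E}[\mathbbm{1}\{W_1\in B\}\mid\mathcal{E}]$ almost surely and in $L^1$, where $\mathcal{E}=\bigcap_n\mathcal{E}_n$ is the exchangeable $\sigma$-field. This already delivers the almost-sure convergence asserted at the end of the theorem. The Borel-space hypothesis on $(\mathcal{W},\mathcal{B})$ is what lets me upgrade the family $\{\mathbf{P}(B)\}_{B\in\mathcal{B}}$, defined $B$-by-$B$, into a genuine regular conditional distribution, so that $\omega\mapsto\mathbf{P}(\omega)(\cdot)$ is almost surely a probability measure.

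Second, I would prove the conditional i.i.d.\ statement by a symmetrization/combinatorial argument. Fix $k$ and bounded measurable $f_1,\ldots,f_k$. For $n\ge k$, exchangeability makes $\mathbb{E}[\prod_{j=1}^k f_j(W_{i_j})]$ identical for all ordered $k$-tuples $(i_1,\ldots,i_k)$ of distinct indices in $[n]$, so averaging over all such tuples gives
\[
\mathbb{E}\Big[\prod_{j=1}^k f_j(W_j)\Big] = \frac{(n-k)!}{n!}\,\mathbb{E}\Big[\sum_{i_1,\ldots,i_k\text{ distinct}}\prod_{j=1}^k f_j(W_{i_j})\Big].
\]
The sum over distinct tuples differs from the product $\prod_{j=1}^k\big(\sum_{i=1}^n f_j(W_i)\big)$ only by ``diagonal'' terms of lower order, so after dividing by $n^k$ the bracketed sum equals $\prod_{j=1}^k\big(\int f_j\,d\mathbf{P}_n\big)$ up to an $O(1/n)$ error, where $\int f_j\,d\mathbf{P}_n = n^{-1}\sum_{i=1}^n f_j(W_i)$. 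Letting $n\to\infty$ and using Stage~1, I obtain $\mathbb{E}[\prod_j f_j(W_j)] = \mathbb{E}[\prod_j \int f_j\,d\mathbf{P}]$; running the same argument conditionally on $\mathcal{E}$ yields the pointwise factorization $\mathbb{E}[\prod_{j=1}^k f_j(W_j)\mid\mathcal{E}] = \prod_{j=1}^k \int f_j\,d\mathbf{P}$, which is precisely conditional independence together with the identification of the conditional law as $\mathbf{P}$.

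Finally, uniqueness is the simplest stage: the factorization identity expresses every mixed moment $\mathbb{E}[\prod_j \mathbf{P}(B_j)]$ in terms of the finite-dimensional distributions of $\{W_i\}$, and on a Borel space these moments determine the law of the random measure $\mathbf{P}$. I expect the main obstacle to be Stage~1's measure-theoretic bookkeeping rather than any single estimate: passing from the countable collection of almost-sure limits $\mathbf{P}(B)$ to an honest random probability measure requires the regular-conditional-distribution machinery, and it is exactly here that the Borel-space assumption is indispensable, since without it the limiting set function need not be countably additive on a common almost-sure event. The combinatorial identity in Stage~2 is conceptually the heart of the theorem, but it reduces to a routine bound once the convergence in Stage~1 is in hand.
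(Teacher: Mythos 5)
The paper never proves this theorem: it is imported verbatim from \citet[Theorem 1.49]{schervish2012theory}, with the history (de Finetti for binary sequences, Hewitt--Savage for compact Hausdorff spaces) delegated to references. So there is no in-paper argument to compare against, and your proposal should be judged on its own. What you have written is the standard reverse-martingale proof of de Finetti's theorem (the one found in Aldous, Kallenberg, and Durrett), and the plan is sound: the ``if'' direction is exactly the paper's Example 3 computation from Section~\ref{subsec:examples} with $Z$ replaced by $\mathbf{P}$; the identity $\mathbb{E}[\mathbbm{1}\{W_1\in B\}\mid\mathcal{E}_n]=\mathbf{P}_n(B)$ follows from exchangeability because permutations of the first $n$ coordinates fix every $\mathcal{E}_n$-measurable event; reverse martingale convergence then gives both the limit random measure and, as a by-product, the almost-sure convergence asserted in the last sentence of the theorem; and the diagonal-versus-off-diagonal bookkeeping in Stage 2, run conditionally on $\mathcal{E}$, gives the factorization $\mathbb{E}[\prod_{j=1}^k f_j(W_j)\mid\mathcal{E}]=\prod_{j=1}^k\int f_j\,d\mathbf{P}$. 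Two small remarks. First, uniqueness is even more immediate than your moment argument: since $\mathbf{P}(B)$ is an almost-sure limit of empirical frequencies, any two representing random measures must agree almost surely on a countable algebra generating $\mathcal{B}$, hence agree almost surely as measures (this is where the Borel-space hypothesis enters again), so the law of $\mathbf{P}$ is pinned down; your mixed-moment route also works because the $\mathbf{P}(B_j)$ are $[0,1]$-valued and bounded random vectors are moment-determined. Second, in Stage 1 it is slightly cleaner to invoke the existence of a regular conditional distribution of $W_1$ given $\mathcal{E}$ (available on a Borel space) and then verify it agrees almost surely with the empirical limits, rather than to build countable additivity by hand from the $B$-by-$B$ limits; both routes are legitimate. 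By contrast, the proof the paper points to in \citet{schervish2012theory} goes through a representation theorem for finite exchangeable sequences (his Theorem 1.48) and a limiting argument, so your approach is a genuinely different, and arguably more self-contained, path to the same statement.
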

\cite{de1929funzione} proved the result only for random variables $W_i$ taking values in $\{0, 1\}$. It was extended to arbitrary compact Hausdorff spaces by~\cite{hewitt1955symmetric}. Interested readers can refer to~\cite{ressel1985finetti} and~\cite{aldous1985exchangeability} for a review and a detailed discussion of probabilistic aspects of exchangeability.
% We refer the reader to~\citet[Theorem 1]{o2009exchangeability} for more details on de Finetti's theorem. 
The hypothesis that there are infinite number of elements in the sequence is crucial and it can be shown that the result is false for a finite sequence of exchangeable random variables, in general. See~\citet[Chapter 1, Section 1.2]{schervish2012theory} for a counterexample. Also, see Theorem 1.48 of~\cite{schervish2012theory} for a representation theorem for a finite sequence of exchangeable random variables.}
%%%%%%%%%%%%%%%%%%%%%%%%%%%%%%%%%%%%%%%%%%%%%%%%%%%%%%%%%%%%%
%%%%%%%%%%%%%%%%%%%%%%%%%%%%%%%%%%%%%%%%%%%%%%%%%%%%%%%%%%%%%
\subsection{Implication for Ranks}\label{subsec:implication-ranks}
One of the most important implications of exchangeability of real-valued random variables is that the ranks of $W_1, \ldots, W_n$ (among this sequence) are uniformly distributed on $\{1,2,\ldots,n\}$. If $\mathcal{X} := \{x_1, \ldots, x_n\}$ is a set with $n$ elements (meaning that all the elements in $\mathcal{X}$ are distinct), then the rank of $x_i$ among $\mathcal{X}$ can be defined as
\begin{equation}\label{eq:rank-no-ties}
\mathrm{rank}^*(x_i; \mathcal{X}) ~:=~ \left|\{j\in[n]:\,x_j \le x_i\}\right|.
\end{equation}
In other words, rank of $x_i$ is the number of elements in $\mathcal{X}$ (including itself) that are smaller than or equal to $x_i$. 
% Because $\mathcal{X}$ has no two elements that are equal, replacing $\le$ in~\eqref{eq:rank-no-ties} with $<$ does not change the rank. This, however, is not the case if the set has some elements that are equal. 
If the set $\mathcal{X}$ has fewer than $n$ elements (meaning that some elements of $\mathcal{X}$ are equal), then ranks as defined in~\eqref{eq:rank-no-ties} will lead to ties, that is, the elements of $\mathcal{X}$ that are equal get the same rank. %There are numerous ways of breaking such ties, but for our purposes, we need to break ties randomly.

{\clr Ranking with ties would, in general, lead to a distribution of $(\mathrm{rank}(X_i; \mathcal{X}), i\in[n])$ that depends on the distribution of $X_i$.} For example, if we have a sequence of $n = 100$ i.i.d. Bernoulli$(p)$ random variables, then the number of 1's and 2's in the sequence of ranks defined by~\eqref{eq:rank-no-ties} depends on $p$. This causes a hindrance to the distribution-free prediction and non-parametric ranks. There are different ways of breaking ties in ranks obtained from~\eqref{eq:rank-no-ties}. For simplicity, we consider the following definition (from~\cite{vorlickova1972asymptotic}) of ranks that works for all sequences alike. 
\begin{defi}[Rank]\label{def:rank-definition}
For a set of real numbers $\mathcal{X} := \{x_1, \ldots, x_n\}$, define the rank of $x_i$ among $\mathcal{X}$ as
\begin{equation}\label{eq:rank-general}
\begin{split}
\mathrm{rank}(x_i; \mathcal{X}) ~&:=~ |\{j\in[n]:\,x_j + \xi U_j \le x_i + \xi U_i\}|\\ ~&=~ \mathrm{rank}^*(x_i + \xi U_i;\,\mathcal{X} + \xi\mathcal{U}),
\end{split}
\end{equation}
where $\xi > 0$ is arbitrary and $U_1, \ldots, U_n$ are iid $\mathrm{Unif}[-1, 1]$ random variables. Here $\mathcal{X} + \xi\mathcal{U} = \{x_i + \xi U_i:\, 1\le i\le n\}.$ 
\end{defi}
Because $U_1, \ldots, U_n$ are almost surely distinct, we get that $x_i + \xi U_i, i\in [n]$ are also distinct with probability one, irrespective of whether $x_1, \ldots, x_n$ have ties. Essentially, jittering the original sequence with some noise breaks the ties and brings the situation back to the case of no ties. This is important to obtain the distribution-free nature of the results to be described and with ties this would not be possible. For further convenience, we formally define the jittered sequence.
\begin{defi}[Jittered Sequence]\label{def:jittered-sequence}
For any $n\ge1$ and sequence $x_1, \ldots, x_n$ {\clr of real numbers}, the jittered sequence with parameter $\xi > 0$ is defined as $x_1^*, \ldots, x_n^*$ with
\[
x_i^* ~:=~ x_i + \xi U_i.
\] 
Here $U_1, \ldots, U_n$ are iid $\mathrm{Unif}[-1,1]$ random variables. We suppress $\xi$ in the notation of $x^*_i$ for convenience. 
\end{defi}
In general, the definition of rank above depends on $\xi > 0$. If $x_i, i\in[n]$ do not have ties, then $\mathrm{rank}(x_i; \mathcal{X})$ in~\eqref{eq:rank-general} matches the one in~\eqref{eq:rank-no-ties} as $\xi$ tends to zero.  For a general sequence with ties, the rank in~\eqref{eq:rank-general} breaks the ties for ranking randomly as $\xi$ tends to zero.  For the purposes of exchangeability, the size of $\xi$ is immaterial but in practice, fixing $\xi$ to be a small constant such as $10^{-8}$ relative to the spacings in the data works as expected for all sets.% $\mathcal{X}${\clr , i.e., $\xi = 10^{-8}\min\{|x_j - x_i|:i,j\in [n]\}$.}

Definition~\ref{def:rank-definition} of ranks coupled with the definition of exchangeability implies the following result proved in Appendix~\ref{appsec:EXCHANGEABILITY-IMPLICATIONS}.
\begin{thm}\label{thm:rank-distribution}
If $W_1, \ldots, W_n$ are exchangeable random variables, then for any $\xi > 0$,
\[
\big(\mathrm{rank}(W_i; \{W_1, \ldots, W_n\}):\,i\in[n]\big) ~\sim~ \mathrm{Unif}\left(\{\pi:[n]\to[n]\}\right).
\]
Here $\mathrm{Unif}\left(\{\pi:[n]\to[n]\}\right)$ represents the uniform distribution over all permutations of $[n]$, that is, each permutation has an equal probability of $1/n!$.
\end{thm}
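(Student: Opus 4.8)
The plan is to reduce the claim to a tie-free sequence and then exploit the natural equivariance of the ranking map under permutations of its inputs. The first move I would make is to pass from $W_1,\dots,W_n$ to the jittered sequence $W_i^* = W_i + \xi U_i$ of Definition~\ref{def:jittered-sequence}, since by Definition~\ref{def:rank-definition} we have $\mathrm{rank}(W_i;\{W_1,\dots,W_n\}) = \mathrm{rank}^*(W_i^*;\{W_1^*,\dots,W_n^*\})$ and the $W_i^*$ are almost surely distinct. The whole argument then splits into two independent pieces: showing that $(W_1^*,\dots,W_n^*)$ is exchangeable, and showing that the rank vector of any exchangeable, almost surely tie-free sequence is uniform over the permutations of $[n]$.

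To establish exchangeability of the jittered sequence, the route I would take is to first show that the paired sequence $\big((W_1,U_1),\dots,(W_n,U_n)\big)$ is exchangeable and then push it through a fixed coordinatewise map. For a bounded test function $g$, conditioning on $(W_1,\dots,W_n)$ and writing $h(w_1,\dots,w_n) := \mathbb{E}[g((w_1,U_1),\dots,(w_n,U_n))]$, the independence of the $U_i$ from the $W_i$ together with $(U_{\pi(1)},\dots,U_{\pi(n)})\overset{d}{=}(U_1,\dots,U_n)$ gives $\mathbb{E}[g((W_{\pi(1)},U_{\pi(1)}),\dots,(W_{\pi(n)},U_{\pi(n)}))] = \mathbb{E}[h(W_{\pi(1)},\dots,W_{\pi(n)})]$, and exchangeability of $(W_1,\dots,W_n)$ removes the $\pi$. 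Applying the map $(w,u)\mapsto w+\xi u$ coordinatewise then shows that $(W_1^*,\dots,W_n^*)$ is exchangeable; this is exactly the transformation principle of Section~\ref{subsec:transformations} applied to the map that realizes the jitter. Almost sure distinctness of the $W_i^*$ follows because, conditionally on the $W_i$, the event $W_i^*=W_j^*$ forces the continuous random variable $U_i-U_j$ to equal a fixed constant, an event of probability zero.

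For the main claim I would record the equivariance of the ranking map. Writing $R(w) = \big(\mathrm{rank}^*(w_i;\{w_1,\dots,w_n\})\big)_{i\in[n]}$ for a vector $w$ with distinct entries, a short counting argument using that $\pi$ is a bijection gives $R(w_{\pi(1)},\dots,w_{\pi(n)}) = R(w)\circ\pi$, where $(R(w)\circ\pi)_k := R(w)_{\pi(k)}$. Setting $R = R(W_1^*,\dots,W_n^*)$, which is almost surely a genuine permutation, and applying the measurable map $R(\cdot)$ to both sides of the distributional identity~\eqref{eq:exchangeability} for $(W_1^*,\dots,W_n^*)$ yields $R \overset{d}{=} R\circ\pi$ for every permutation $\pi$. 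Finally, for any target permutation $\sigma$ this gives $\mathbb{P}(R=\sigma)=\mathbb{P}(R\circ\pi=\sigma)=\mathbb{P}(R=\sigma\circ\pi^{-1})$; letting $\pi$ range over all permutations shows $\mathbb{P}(R=\sigma)$ is independent of $\sigma$, and since the $n!$ events $\{R=\sigma\}$ partition a probability-one set, each has probability $1/n!$.

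The one genuinely delicate point I expect is the exchangeability of the paired sequence $\big((W_i,U_i)\big)_{i\in[n]}$. It is tempting to dismiss it as ``exchangeable paired with exchangeable,'' but the clean argument really uses that the injected noise is i.i.d. \emph{and} independent of the data, which is precisely what lets the conditioning reduction above go through. Everything after that step — the coordinatewise transformation, the equivariance identity, and the closing counting argument — is routine.
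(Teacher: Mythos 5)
Your proof is correct and follows essentially the same route as the paper's: jitter the sequence to remove ties, establish exchangeability of the jittered values, and then conclude via symmetry that the $n!$ disjoint, exhaustive rank assignments all carry equal probability $1/n!$. The only differences are presentational --- the paper phrases the symmetry step through the ordering events $\{Z_{\pi(1)} \le \cdots \le Z_{\pi(n)}\}$ rather than through equivariance of the rank map (the paper itself notes these are the same events), and it asserts in one line the exchangeability of the jittered sequence that you prove in detail by conditioning on the $W_i$'s.
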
 
Theorem~\ref{thm:rank-distribution} shows that the ranks of $W_i, i\in[n]$ are exchangeable, and further that their distribution does not depend on the distribution of $W_i$. It should be mentioned here that the distribution of the ranks is computed including the randomness of $U_1, \ldots, U_n$; they are not conditioned on. This theorem also represents one of the most useful implications of exchangeability and is crucial in proving the validity of rank tests as well as conformal prediction. 

For the validity guarantees of rank tests, Theorem~\ref{thm:rank-distribution} in its form is enough. For the validity guarantees of conformal prediction, we need the following corollary (proved in Appendix~\ref{appsec:EXCHANGEABILITY-IMPLICATIONS}) of Theorem~\ref{thm:rank-distribution}.
\begin{cor}\label{cor:for-conformal-prediction}
Under the assumptions of Theorem~\ref{thm:rank-distribution}, for any $\xi > 0$, we have
\[
\mathbb{P}\bigg(\mathrm{rank}(W_n;\,\{W_1, \ldots, W_n\}) \le t\bigg) ~=~ \frac{\lfloor t\rfloor}{n}, 
\]
where, for $t\in\mathbb{R}$, $\lfloor t\rfloor$ represents the largest integer smaller than {\clr or equal to} $t$. {\clr Moreover, the random variable $P := \mathrm{rank}(W_n; \{W_1,\ldots,W_n\})/n$ is a valid $p$-value, i.e., 
\[
\mathbb{P}(P \le \alpha) ~\le~ \alpha\quad\mbox{for all}\quad \alpha\in[0, 1].
\]}
\end{cor}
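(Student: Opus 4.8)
The plan is to deduce both claims directly from Theorem~\ref{thm:rank-distribution}, which states that the full rank vector $\big(\mathrm{rank}(W_i;\{W_1,\ldots,W_n\}):i\in[n]\big)$ is uniformly distributed over the $n!$ permutations of $[n]$. The first step is to extract from this the marginal law of the single coordinate $R_n := \mathrm{rank}(W_n;\{W_1,\ldots,W_n\})$. Since the rank vector is, with probability one (thanks to the jittering in Definition~\ref{def:rank-definition}), a genuine permutation of $[n]$, I would view it as a uniform random element $\rho$ of the symmetric group and note $R_n = \rho(n)$. For each fixed $k\in[n]$, the number of permutations $\pi$ with $\pi(n)=k$ is $(n-1)!$, so $\mathbb{P}(R_n = k) = (n-1)!/n! = 1/n$. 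Hence $R_n$ is uniform on $\{1,2,\ldots,n\}$.

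Given this marginal, the first displayed identity follows by summing the point masses: for $t\in[0,n]$,
\[
\mathbb{P}(R_n \le t) ~=~ \sum_{k=1}^{\lfloor t\rfloor} \mathbb{P}(R_n = k) ~=~ \frac{\lfloor t\rfloor}{n},
\]
where only the values $k\in\{1,\ldots,\lfloor t\rfloor\}$ contribute, and the sum is empty (giving $0$) when $0\le t<1$. I would also remark that the identity is meant for the relevant range $t\le n$; for $t\ge n$ the event has probability one while $\lfloor t\rfloor/n$ may exceed one.

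For the second claim, I would specialize the CDF formula. Writing $P = R_n/n$ and fixing $\alpha\in[0,1]$, the event $\{P\le\alpha\}$ equals $\{R_n\le n\alpha\}$, and since $t=n\alpha\in[0,n]$ the first part applies to give $\mathbb{P}(P\le\alpha) = \lfloor n\alpha\rfloor/n$. The elementary bound $\lfloor n\alpha\rfloor \le n\alpha$ then yields $\mathbb{P}(P\le\alpha)\le\alpha$, which is exactly the validity of $P$ as a $p$-value.

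I do not anticipate a genuine obstacle: the substance is carried entirely by Theorem~\ref{thm:rank-distribution}. The only points requiring care are the bookkeeping around the floor function (the edge cases $t<1$ and $t\ge n$) and the clean passage from joint uniformity over permutations to the marginal uniformity of one coordinate, which is the counting step $\mathbb{P}(\rho(n)=k)=(n-1)!/n!$.
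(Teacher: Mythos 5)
Your proof is correct and follows essentially the same route as the paper: both reduce the CDF claim to summing the point masses $\mathbb{P}(\mathrm{rank}(W_n;\{W_1,\ldots,W_n\})=k)=(n-1)!/n!=1/n$ obtained from the uniform distribution of the rank vector in Theorem~\ref{thm:rank-distribution}. Your explicit handling of the $p$-value claim via $\lfloor n\alpha\rfloor \le n\alpha$ and the edge cases of the floor function is exactly the intended (and in the paper, implicit) completion of the argument.
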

%%%%%%%%%%%%%%%%%%%%%%%%%%%%%%%%%%%%%%%%%%%%%%%%%%%%%%%%%%%%%
%%%%%%%%%%%%%%%%%%%%%%%%%%%%%%%%%%%%%%%%%%%%%%%%%%%%%%%%%%%%%
\subsection{Transformations Preserving Exchangeability}\label{subsec:transformations}
Theorem~\ref{thm:rank-distribution} holds for real-valued random variables\footnote{For random variables in a metric space, the definition of ranks can be extended and a result similar to Theorem~\ref{thm:rank-distribution} can be proved; see~\cite{deb2019multivariate} for details.} and to explore the full strength of exchangeability in arbitrary spaces, we transform random variables from arbitrary spaces to the real line. {\clr If the transformation to the real line does not depend on the data (or is constructed from an independent data), then it is relatively easy to verify that the transformed variables also form an exchangeable sequence. In many cases, one might not have access to independent data or might want to use the full data for a more ``powerful'' transformation.} For such purposes, we need a result to verify exchangeability of random variables after transformation stated below. As a motivation, consider the following examples:
\begin{itemize}
    \item Suppose $W_1, \ldots, W_n$ are exchangeable. Consider the transformed variables $W_1 - \overline{W}_n, \ldots, W_n - \overline{W}_n$, where $\overline{W}_n$ is the average of the $n$ variables. In this case the transformation takes $n$ variables to $n$ variables. Intuitively, these are exchangeable but how does one prove it rigorously. One could use~\eqref{eq:cumulative-distribution-function}.
    \item {\clr In the same setting as above, consider the transformed variables to be} $W_1 - \overline{W}_{-3}, W_2 - \overline{W}_{-3}, W_3 - \overline{W}_{-3}$, where $\overline{W}_{-3}$ is the average of $W_4, \ldots, W_n$ (the sequence without the first three elements).    
    \item {\clr In the same setting as above, consider the transformed variables to be $W_1 - \overline{W}_{n-1}, \ldots, W_{n-1}-\overline{W}_{n-1}, W_n - \overline{W}_{n-1}$. In this case, the transformation depends only on first $n-1$ variables and takes a sequence of $n$ variables to $n$ variables.}
\end{itemize} 
The following {\clr is an} important result about transformations preserving exchangeability {\clr taken from~\cite{dean1990linear} and~\citet[Section 2.5]{commenges2003transformations}}. The setting is as follows: $W_1, \ldots, W_n$ are random variables taking values in a space $\mathcal{W}$ and $G$ is a transformation taking a vector of $n$ elements in $\mathcal{W}$ to a vector of $m$ elements in another space $\mathcal{W}'$. Here $\mathcal{W}$ and $\mathcal{W}'$ are arbitrary sets. (Usually $\mathcal{W}$ would be an arbitrary space and $\mathcal{W}'$ is the real line.) {\clr For any for $y = (y_1, \ldots, y_m)$ and a permutation $\pi_1:[m]\to[m]$, set $\pi_1y = (y_{\pi_1(1)}, \ldots, y_{\pi_1(m)})$.
\begin{thm}[\cite{dean1990linear}]\label{thm:transformations-exchangeability}
Suppose $W = (W_1, \ldots, W_n)\in\mathcal{W}^n$ is a vector of exchangeable random variables. Fix a transformation $G:\mathcal{W}^n\to(\mathcal{W}')^m$. If for each permutation $\pi_1:[m]\to[m]$ there exists a permutation $\pi_2:[n]\to[n]$ such that
\begin{equation}\label{eq:permutation-condition}
\pi_1G(w) ~=~ G(\pi_2w),\quad\mbox{for all}\quad w\in\mathcal{W}^n,
\end{equation}
then $G(\cdot)$ preserves exchangeability of $W$. Conversely, if $G(\cdot)$ preserves exchangeability of $W$ whatever the distribution of $W$, then for each permutation $\pi_1:[m]\to[m]$ and $w\in\mathcal{W}^n$, there exists a permutation $\pi_2:[n]\to[n]$ (possibly depending on $w$) such that $\pi_1G(w) = G(\pi_2w)$. Furthermore, if $G(\cdot)$ is a linear transformation, then $G$ is exchangeability preserving if and only if~\eqref{eq:permutation-condition} holds true.
\end{thm}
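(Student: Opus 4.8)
The plan is to treat the three assertions as separate sub-proofs, since the sufficiency direction, the distribution-free converse, and the sharpening for linear $G$ rely on genuinely different ideas.

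\textbf{Sufficiency.} First I would show that~\eqref{eq:permutation-condition} is sufficient. Fix any permutation $\pi_1:[m]\to[m]$ and let $\pi_2:[n]\to[n]$ be the permutation it supplies via~\eqref{eq:permutation-condition}. Evaluating that identity at the random input $W$ gives $\pi_1 G(W) = G(\pi_2 W)$ almost surely. Since $W$ is exchangeable, $\pi_2 W = (W_{\pi_2(1)},\ldots,W_{\pi_2(n)}) \overset{d}{=} W$, and applying the measurable map $G$ to equidistributed inputs preserves equality in distribution, so $G(\pi_2 W) \overset{d}{=} G(W)$. Chaining these, $\pi_1 G(W) \overset{d}{=} G(W)$, and as $\pi_1$ was arbitrary, $G(W)$ is exchangeable. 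This direction is routine.

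\textbf{Converse.} The harder direction, and the crux of the theorem, is to feed $G$ a carefully chosen exchangeable law. Fix an arbitrary $w = (w_1,\ldots,w_n)\in\mathcal{W}^n$ and let $W := \Sigma w := (w_{\Sigma(1)},\ldots,w_{\Sigma(n)})$, where $\Sigma$ is uniform over all permutations of $[n]$. Such a $W$ is exchangeable, because re-permuting a uniformly random permutation of a fixed vector is again uniform. By hypothesis $G(W)$ is then exchangeable, so $\pi_1 G(W) \overset{d}{=} G(W)$ for every $\pi_1$. Now $G(W)$ is supported on the finite orbit $O := \{G(\sigma w):\,\sigma\text{ a permutation of }[n]\}$, and the value $G(w)$ is attained with probability at least $1/n! > 0$. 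Equal distributions on a discrete set have equal supports, and the support of $\pi_1 G(W)$ is the image of $O$ under the bijection $y\mapsto\pi_1 y$; hence $\pi_1 O = O$. Since $G(w)\in O$, this gives $\pi_1 G(w)\in O$, i.e.\ $\pi_1 G(w) = G(\pi_2 w)$ for some permutation $\pi_2$, which may depend on $w$ exactly as claimed.

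\textbf{Linear case.} Here I would upgrade the $w$-dependent permutation from the converse to a single permutation. Write the linear map as $G(w) = Aw$, so that $\pi_1 G(w) = G(\pi_2 w)$ becomes $P_{\pi_1}Aw = AP_{\pi_2}w$, where $P_\pi$ denotes the linear operator permuting the coordinate blocks. For fixed $\pi_1$ define $E_\sigma := \ker\!\big(P_{\pi_1}A - AP_\sigma\big)$ over all permutations $\sigma$ of $[n]$; each $E_\sigma$ is a linear subspace. The converse just established says every $w$ lies in some $E_\sigma$, that is $\bigcup_\sigma E_\sigma = \mathcal{W}^n$. Because $\mathcal{W}^n$ is a vector space over the infinite field $\mathbb{R}$, it is not a finite union of proper subspaces, so some $E_\sigma$ equals all of $\mathcal{W}^n$; equivalently $P_{\pi_1}A = AP_\sigma$, which is~\eqref{eq:permutation-condition} with $\pi_2=\sigma$. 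The reverse implication is exactly the sufficiency already proved.

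The delicate point is the converse: the entire argument hinges on choosing an exchangeable distribution whose support pins $G$ down orbit-by-orbit. The uniform-permutation law is essentially forced, since $W$ must remain exchangeable while its support is localized to a single orbit; a naive attempt to concentrate near a point $w$ fails, as a point mass at $w$ is exchangeable only when all coordinates of $w$ coincide. Once that law is in hand, the support bookkeeping and the finite-union-of-subspaces fact used in the linear step are standard.
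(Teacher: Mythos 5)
Your sufficiency and converse arguments are correct and essentially identical to the paper's proof in Appendix B.3: both directions use the same key construction, namely feeding $G$ the uniform distribution on the orbit $\{\pi_2'w :\, \pi_2' \text{ a permutation of } [n]\}$ of a fixed $w$, and then matching supports of $\pi_1 G(W)$ and $G(W)$ to extract the $w$-dependent permutation $\pi_2$. Where you genuinely depart from the paper is the linear case: the paper does not prove it at all, instead deferring to Theorem 4 of Dean and Verducci (1990), whereas you give a self-contained argument, writing $G(w)=Aw$, observing that the converse forces $\mathcal{W}^n=\bigcup_\sigma \ker\bigl(P_{\pi_1}A-AP_\sigma\bigr)$, and invoking the fact that a vector space over an infinite field is not a finite union of proper subspaces to conclude $P_{\pi_1}A=AP_\sigma$ for a single $\sigma$. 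This is a nice upgrade in that it makes the whole theorem self-contained; the only caveat is that your argument requires the scalar field to be infinite (over a finite field a space can be a finite union of proper subspaces, e.g.\ $\mathbb{F}_2^2$ is the union of its three lines), so you should state that $\mathcal{W}$ is a real (or infinite-field) vector space, which is the setting intended by the paper and in which ``linear transformation'' makes sense.
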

% {\clr 
Theorem~\ref{thm:transformations-exchangeability} follows from the proof of Theorem 4 of~\cite{dean1990linear}. \citet[Section 2.5]{commenges2003transformations} states (without proof) that~\eqref{eq:permutation-condition} is a necessary and sufficient condition for $G(\cdot)$ to be exchangeability preserving. At present, we could only prove Theorem~\ref{thm:transformations-exchangeability}; see Appendix~\ref{appsec:commenges}. The only difference between the necessary and sufficient conditions in Theorem~\ref{thm:transformations-exchangeability} is that the permutation $\pi_2$ can depend on $w$ in the necessary condition.}
This theorem, in words, states that a transformation of exchangeable random variables is exchangeable if a permutation of the transformed random variables is equal to the transformation applied to a permutation of the original exchangeable random variables.

As an application, we will revisit the examples discussed above. 
\begin{itemize}
\item In the first example above, the transformation is
\[
G:\, (W_1, \ldots, W_n) ~\mapsto~ (W_1 - \overline{W}_n, \ldots, W_n - \overline{W}_n).
\]
A permutation of the right hand side is $$(W_{\pi(1)} - \overline{W}_n, \ldots, W_{\pi(n)} - \overline{W}_n),$$ and this is equal to $G(W_{\pi(1)}, \ldots, W_{\pi(n)})$, because the average of $n$ variables is a symmetric function and does not change with a permutation. Hence, $G(W_1, \ldots, W_n)$ is a vector of exchangeable random variables whenever $W_1, \ldots, W_n$ are exchangeable by Theorem~\ref{thm:transformations-exchangeability}. 
\item In the second example, the transformation is
\[
G:\,(W_1, \ldots, W_n) ~\mapsto~ (W_1 - \overline{W}_{-3}, W_2 - \overline{W}_{-3}, W_3 - \overline{W}_{-3}).
\]
In this case if we permute the right hand side to get $(W_{\pi(1)} - \overline{W}_{-3}, W_{\pi(2)} - \overline{W}_{-3}, W_{\pi(3)} - \overline{W}_{-3})$, then it corresponds to applying the same permutation on the first three elements of $W_1, \ldots, W_n$ and leaving the remaining elements as is. Formally, take $\pi_1:[n]\to[n]$ such that $\pi_1(1)=\pi(1), \pi_1(2) = \pi(2)$, $\pi_1(3) = \pi(3)$, and $\pi_1(i) = i$ for $i\ge4$. This implies that $W_1 - \overline{W}_{-3}, W_2 - \overline{W}_{-3}, W_3 - \overline{W}_{-3}$ are exchangeable if $W_1, \ldots, W_n$ are exchangeable, again by Theorem~\ref{thm:transformations-exchangeability}. {\clr This application is related to the split conformal method discussed in Section~\ref{sec:split-conformal}.}
\item {\clr For the third example, the transformation is
\[
G:\,(W_1, \ldots, W_n) ~\mapsto~ (W_1 - \overline{W}_{n-1}, \ldots, W_{n-1}-\overline{W}_{n-1}, W_n - \overline{W}_{n-1}).
\]
Note that this is a linear transformation and Theorem~\ref{thm:transformations-exchangeability} provides a necessary and sufficient condition.
If we apply a permutation $\pi$ on $G(W_1,\ldots,W_n)$, we get
\begin{equation}\label{eq:transformed-vector-example-3}
(W_{\pi(1)} - \overline{W}_{n-1},\, \ldots, W_{\pi(n-1)} - \overline{W}_{n-1},\, W_{\pi(n)} - \overline{W}_{n-1}).
\end{equation}
Because $\overline{W}_{n-1}$ is an asymmetric function of $(W_1, \ldots, W_n)$, the vector in~\eqref{eq:transformed-vector-example-3} is not equal to the transformation $G$ applied to $(W_{\pi(1)},\ldots, W_{\pi(n)})$. This implies that, in general, $G(W_1, \ldots, W_n)$ is not a vector of exchangeable random variables.}
\end{itemize}
Having described in details the implications of exchangeability, we now proceed to explore the applications for conformal prediction and rank tests. All the results that follow are corollaries of the results in the current section. {\clr It might be worth mentioning here that none of the results in the paper are new or difficult to prove. They are all standard.} This is the main intent of the article: to show that most of conformal prediction and non-parametric rank tests follow from some basic facts about exchangeability. 
%%%%%%%%%%%%%%%%%%%%%%%%%%%%%%%%%%%%%%%%%%%%%%%%%%%%%%%%%%%%%
%%%%%%%%%%%%%%%%%%%%%%%%%%%%%%%%%%%%%%%%%%%%%%%%%%%%%%%%%%%%%
\section{Conformal Prediction}\label{SEC:CONFORMAL-PREDICTION}
Conformal prediction is a {\clr generic tool for} finite sample, distribution-free valid predictive inference introduced by~\cite{vovk2005algorithmic} {\clr  and~\cite{shafer2008tutorial}}. This method of predictive inference was reintroduced to the statistics community by~\cite{MR3174619}. 
\subsection{Formulation of the Problem}
The general formulation of the prediction problem is as follows. Given realizations of $n$ exchangeable random variables $W_1, \ldots, W_n$, construct a prediction region for a future random variable, $W_{n+1},$ that is exchangeable with the first $n$ random variables, i.e., $W_1, \ldots, W_{n+1}$ is a sequence of exchangeable random variables. Mathematically, for $\alpha\in[0, 1]$, construct a prediction region $\widehat{\mathcal{R}}_{n,\alpha}$ depending on $W_1, \ldots, W_n$, that is,
\[
\widehat{\mathcal{R}}_{n,\alpha} = \widehat{\mathcal{R}}_{n,\alpha}(W_1, \ldots, W_n),
\] 
such that the $(n+1)$-st random variable $W_{n+1}$ belongs in this region with a probability of at least $1 - \alpha$:
\begin{equation}\label{eq:prediction-guarantee}
\mathbb{P}\left(W_{n+1} \in \widehat{\mathcal{R}}_{n,\alpha}\right) ~\ge~ 1 - \alpha,
\end{equation}
whenever $W_1, \ldots, W_{n+1}$ form a sequence of exchangeable random variables. {\clr In~\eqref{eq:prediction-guarantee}, the probability $\mathbb{P}(\cdot)$ is the probability of the event with respect to the joint distribution of $(W_1, \ldots, W_{n+1})$.} In the general formulation here, there is no restriction on the random variables $W_1, \ldots, W_{n+1}$ to be real-valued; in fact, they may be elements of an arbitrary sample space $\mathcal{W}$. {\clr Prediction problems in general spaces occur in applications such as functional data analysis~\citep{lei2015conformal} and image prediction~\citep{bates2021distribution}.}

The idea of conformal prediction for real-valued random variables would be that the rank of the future observation $W_{n+1}$ among the collection $\{W_1, \ldots, W_{n+1}\}$ is equally likely to be any of $1,2,\ldots,n+1$. We will deal with prediction in arbitrary spaces by using transformations that map these spaces to the real line, so that the rank transformation can be applied and the uniform distribution of the ranks can be leveraged (Section~\ref{subsec:implication-ranks}). To this end, Theorem~\ref{thm:transformations-exchangeability} would play an important role. 
\subsection{Full Conformal Prediction}\label{sec:full-conformal-method}
\subsubsection{Real-valued Random Variables.}
If $W_1, \ldots, W_{n+1}$ are real valued and exchangeable, then with ranks defined as in Definition~\ref{def:rank-definition}, Corollary~\ref{cor:for-conformal-prediction} implies that
\begin{align*}
&\mathbb{P}\bigg(\mathrm{rank}(W_{n+1};\,\{W_1, \ldots, W_{n+1}\}) \le \lceil(n+1)(1-\alpha)\rceil\bigg) ~=~ \frac{\lceil (n+1)(1 - \alpha)\rceil}{n+1}.% ~\in~ (1 - \alpha) ~+~ \left[0, \frac{1}{n+1}\right).
\end{align*}
It is easy to verify that the right hand side is at least $1 - \alpha$ and at most $1 - \alpha + 1/(n+1)$.  
Hence, a one-sided prediction region can be constructed as follows:
\begin{equation}\label{eq:conformal-prediction-main}
\widehat{\mathcal{R}}_{n,\alpha} ~:=~ \bigg\{w\in\mathbb{R}:\,\mathrm{rank}(w; \{W_1, \ldots, W_n, w\}) ~\le~ \lceil (n+1)(1 - \alpha) \rceil\bigg\},
\end{equation}
This is documented in the following result.
\begin{prop}\label{prop:conformal-prediction-main}
If $W_1, \ldots, W_{n+1}\in\mathbb{R}$ form a sequence of exchangeable random variables, then 
\[
1 - \alpha ~\le~ \mathbb{P}\left(W_{n+1}~\in~\widehat{\mathcal{R}}_{n,\alpha}\right) ~\le~ 1 - \alpha + \frac{1}{n+1},\quad\mbox{for all}\quad n\ge1, \alpha\in[0, 1],
\]
where the probability extends over all variables $W_1, \ldots, W_{n+1}$.
\end{prop}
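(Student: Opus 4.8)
The plan is to reduce the event $\{W_{n+1} \in \widehat{\mathcal{R}}_{n,\alpha}\}$ to a statement about a single rank and then apply Corollary~\ref{cor:for-conformal-prediction} essentially verbatim. First I would unpack the definition~\eqref{eq:conformal-prediction-main}: a point $w$ lies in $\widehat{\mathcal{R}}_{n,\alpha}$ exactly when $\mathrm{rank}(w; \{W_1, \ldots, W_n, w\}) \le \lceil (n+1)(1-\alpha)\rceil$. Substituting the genuine future observation $w = W_{n+1}$, the defining inequality becomes $\mathrm{rank}(W_{n+1}; \{W_1, \ldots, W_{n+1}\}) \le \lceil (n+1)(1-\alpha)\rceil$, so that
\[
\mathbb{P}\left(W_{n+1} \in \widehat{\mathcal{R}}_{n,\alpha}\right) ~=~ \mathbb{P}\bigg(\mathrm{rank}(W_{n+1}; \{W_1, \ldots, W_{n+1}\}) \le \lceil (n+1)(1-\alpha)\rceil\bigg).
\]

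Next I would invoke Corollary~\ref{cor:for-conformal-prediction} for the exchangeable sequence $W_1, \ldots, W_{n+1}$, that is, with $n$ replaced by $n+1$ and $W_{n+1}$ playing the role of the distinguished last coordinate. Since the threshold $t = \lceil (n+1)(1-\alpha)\rceil$ is an integer we have $\lfloor t\rfloor = t$, and the corollary gives the exact value
\[
\mathbb{P}\bigg(\mathrm{rank}(W_{n+1}; \{W_1, \ldots, W_{n+1}\}) \le \lceil (n+1)(1-\alpha)\rceil\bigg) ~=~ \frac{\lceil (n+1)(1-\alpha)\rceil}{n+1}.
\]
The two-sided bound then follows from the elementary inequalities $x \le \lceil x\rceil < x+1$ taken at $x = (n+1)(1-\alpha)$ and divided through by $n+1$, which sandwich the right-hand side between $1-\alpha$ and $1-\alpha + 1/(n+1)$; the strict upper inequality yields the nonstrict bound in the statement. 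The edge cases $\alpha = 0$ and $\alpha = 1$ are covered automatically.

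I do not expect a genuine obstacle here, since the result is a direct corollary; the only point requiring care is the consistency of the jittering when the region is evaluated at $w = W_{n+1}$. The rank in~\eqref{eq:conformal-prediction-main} jitters all $n+1$ entries (including the candidate $w$) by independent $\mathrm{Unif}[-1,1]$ noise, and one must confirm that this is precisely the rank whose distribution Corollary~\ref{cor:for-conformal-prediction} characterizes. This is immediate once one recalls that the corollary computes the probability over the joint randomness of the $W_i$'s \emph{and} the jittering variables (they are not conditioned on), and that by exchangeability its statement is invariant to which index is singled out, so one may take that index to be $n+1$ rather than $n$. Notably, no appeal to Theorem~\ref{thm:transformations-exchangeability} is needed, as the $W_i$ are already real-valued, and no distributional assumption beyond exchangeability enters.
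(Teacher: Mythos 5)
Your proof is correct and takes essentially the same route as the paper: the paper likewise identifies the coverage event with the rank event, applies Corollary~\ref{cor:for-conformal-prediction} with $n+1$ in place of $n$ to obtain the exact value $\lceil (n+1)(1-\alpha)\rceil/(n+1)$, and then sandwiches this between $1-\alpha$ and $1-\alpha+1/(n+1)$. Your extra care about the jittering being included in the probability and about which index is singled out is sound, but these are exactly the conventions already built into Corollary~\ref{cor:for-conformal-prediction}, so no new argument is needed there.
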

This result is essentially proved above and follows from the basic corollary~\ref{cor:for-conformal-prediction} of the definition of exchangeability. Although the result is a restatement of Corollary~\ref{cor:for-conformal-prediction}, formulating the result in terms of prediction regions provides a form of finite sample distribution-free valid inference. Furthermore, the interval is not overly conservative in that the coverage is at most $1/(n+1)$ away from the required coverage of $(1 - \alpha)$. 
The set $\widehat{\mathcal{R}}_{n,\alpha}$ is defined implicitly and we now describe the computation of this prediction set.% $\widehat{\mathcal{R}}_{n,\alpha}$.
%, in terms of the realized values of $W_1, \ldots, W_n$. Fix $\xi > 0$ and let $u_1, \ldots, u_{n+1}$ be realizations of $n+1$ i.i.d. $\mathrm{Unif}[-1, 1]$ random variables. Let $w_1, \ldots, w_n$ be the realizations of $W_1, \ldots, W_n$ and the jittered sequence be $w_1^* = w_1 + \xi u_1, \ldots, w_n^* = w_n + \xi u_n$. The sorted version of the jittered sequence be
% \[
% w^*_{(1)} ~\le~ w_{(2)}^* ~\le~ \cdots ~\le~ w_{(n)}^*. 
% \]
% If $w + \xi u_{n+1}$ is less than $w_{(1)}^*$, then $\mathrm{rank}(w;\,\{w_1, \ldots, w_n, w\}) = 1$ and similarly, if $w + \xi u_{n+1} \in (w_{(k)}^*, w_{(k+1)}^*)$, then
% \[
% \mathrm{rank}(w;\,\{w_1, \ldots, w_n, w\}) = k+1.
% \]
% Therefore the set $\widehat{\mathcal{R}}_{n,\alpha}$ in~\eqref{eq:conformal-prediction-main} is given by 
% $$\widehat{\mathcal{R}}_{n,\alpha} ~:=~ \left(-\infty, W_{(\lceil (n+1)(1-\alpha)\rceil)}^* - \xi U_{n+1}\right].$$
\begin{tcolorbox}
\textbf{Pseudocode 1:} The set $\widehat{\mathcal{R}}_{n,\alpha}$ can be computed as follows.
\begin{enumerate}
    \item Take $\xi = 10^{-8}$ and generate $U_1, \ldots, U_n, U_{n+1}$ from the $\mathrm{Unif}[-1, 1]$ distribution. Define the jittered sequence
    $W_i^* := W_i + \xi U_i,\; i\in[n].$
    \item Sort the jittered random variables and let the sorted vector be
    \[
    W_{(1)}^* ~\le~ \ldots ~\le~ W_{(n)}^*.
    \]
    \item Compute $I := \lceil (n+1)(1-\alpha)\rceil$ and report the interval
    \[
    \left(-\infty,\, W_{(I)}^* - \xi U_{n+1}\right].
    \]
\end{enumerate}
\end{tcolorbox}
{\clr This is the simplest example of the full conformal method where exchangeability is invoked on the original set of real-valued random variables $W_1, \ldots, W_{n}, W_{n+1}$. Sometimes it might be useful to apply Proposition~\ref{prop:conformal-prediction-main} to a transformed data. For example, note that the prediction region $\widehat{\mathcal{R}}_{n,\alpha}$ is a one-sided interval and a two-sided interval might be preferable in practice.
\subsubsection{Arbitrary Spaces.}
For notational convenience, let $Z_1, Z_2, \ldots, Z_n\in\mathcal{Z}$ be exchangeable random variables and we want to construct a prediction region for $Z_{n+1}$, when $Z_1, \ldots, Z_n,$ $Z_{n+1}$ is a sequence of exchangeable random variables in $\mathcal{Z}$. %(
Here $\mathcal{Z}$ can be $\mathbb{R}, \mathbb{R}^d,$ or any other arbitrary space.
% ) %{\color{blue}Need to incorporate examples of prediction problems in general spaces.} %All our results on exchangeability and ranks are for univariate random variables. To get prediction regions for random variables in $\mathcal{Z}$, the (trivial) idea is to reduce the arbitrary space $\mathcal{Z}$ to the real line $\mathbb{R}$ by a transformation. 

For any (non-random) transformation $g:\mathcal{Z}\to\mathbb{R}$, 
\[
g(Z_1), g(Z_2), \ldots, g(Z_n), g(Z_{n+1}),
\]
are real-valued exchangeable random variables. This fact can be verified based on Theorem~\ref{thm:transformations-exchangeability}. Hence, Proposition~\ref{prop:conformal-prediction-main} applies and we obtain
\begin{equation}\label{eq:non-random-transformed-full-conformal}
\mathbb{P}\left(g(Z_{n+1}) \le (g(Z))^*_{(\lceil(n+1)(1-\alpha)\rceil)} - \xi U_{n+1}\right) \ge 1 - \alpha,
\end{equation}
where $(g(Z))^*_{(\lceil(n+1)(1-\alpha)\rceil)}$ is the $\lceil(n+1)(1-\alpha)\rceil$-th largest value of $g(Z_i) + \xi U_i, 1\le i\le n$. Inequality~\eqref{eq:non-random-transformed-full-conformal} yields a valid finite sample prediction region, irrespective of what $g:\mathcal{Z}\to\mathbb{R}$ is. 

For concrete examples, one can consider the following transformations. If $\mathcal{Z} = \mathbb{R}$, taking $g(z) = |z|$ yields the prediction region
\begin{equation}\label{eq:symmetric-real-set}
\{z\in\mathbb{R}:\,|z| \le |Z|^*_{(\lceil(n+1)(1-\alpha)\rceil)} - \xi U_{n+1}\}.
% \left[\xi U_{n+1} - |Z|^*_{(\lceil(n+1)(1-\alpha)\rceil)},\, |Z|^*_{(\lceil(n+1)(1-\alpha)\rceil)} - \xi U_{n+1}\right].
\end{equation}
This is a two-sided interval centered at $0$. 
If $\mathcal{Z} = \mathbb{R}^d$ or a general normed linear space with norm $\|\cdot\|$, then taking $g(z) = \|z\|$ yields the prediction region
\begin{equation}\label{eq:symmetric-normed-set}
\{z\in\mathcal{Z}:\,\|z\| \le \|Z\|^*_{(\lceil(n+1)(1-\alpha)\rceil)} - \xi U_{n+1}\}.
\end{equation}
This is a symmetric bounded set centered at $0\in\mathcal{Z}$. Prediction sets~\eqref{eq:symmetric-real-set} and~\eqref{eq:symmetric-normed-set} both suffer from the same disadvantage: they are symmetric around zero. If the true distribution of $Z_i$'s has a mode at some $z_0 \neq 0$, then these prediction regions are unnecessarily large. For instance, for the normal distribution with mean $1$, the shortest prediction interval is centered at $1$. However, a valid symmetric prediction interval centered at $0$ is about $1.34$ times larger than the shortest interval centered at $1$.

This disadvantage can be rectified by considering a data-dependent transformation. Now we need to consider transformations that retain exchangeability and hence, Theorem~\ref{thm:transformations-exchangeability} plays a crucial role. A data-dependent transformation $g:\mathcal{Z}\to\mathbb{R}$ denoted by $g(z;\, Z_1, \ldots, Z_n, Z_{n+1})$ is said to be permutation invariant if for any permutation $\pi:[n+1]\to[n+1]$,
\begin{equation}\label{eq:permutation-invariance-full-conformal}
g(z;\, Z_1, \ldots, Z_n, Z_{n+1}) = g(z;\, Z_{\pi(1)}, Z_{\pi(2)}, \ldots, Z_{\pi(n)}, Z_{\pi(n+1)}).
\end{equation}
For intuition, one may consider the following examples of permutation invariant transformations.
\begin{itemize}[leftmargin=*]
\item \textbf{Location Centering:} If $\mathcal{Z} = \mathbb{R}$, then an example transformation is
\[
g(z; Z_1, \ldots, Z_{n}, Z_{n+1}) ~:=~ \left|z - \frac{1}{n+1}\sum_{j=1}^{n+1} Z_j\right|. 
\]
Because the mean is permutation invariant, this is a permutation invariant transformation. Clearly, one can replace the mean by the median of $Z_1, \ldots, Z_{n+1}$, which is also a symmetric function. If $\mathcal{Z}$ is a normed linear space with the norm $\|\cdot\|$, then define
\[
g(z; Z_1, \ldots, Z_{n}, Z_{n+1}) ~:=~ \left\|z - \frac{1}{n+1}\sum_{j=1}^{n+1} Z_j\right\|. 
\]
Once again, this is also permutation invariant.
\item \textbf{Density Transformation:} If $\mathcal{Z} = \mathbb{R}$, then define
\begin{equation}\label{eq:density-transformation}
g(z;\, Z_1, \ldots, Z_n, Z_{n+1}) ~:=~ \frac{1}{\widehat{p}_{n+1}(z)},
\end{equation}
where $\widehat{p}_{n+1}(\cdot)$ is a density estimator that depends permutation invariantly on $Z_1, \ldots, Z_{n+1}$. For example, one can take $\widehat{p}_{n+1}(\cdot)$ to be the kernel density estimator
\[
\widehat{p}_{n+1}(s) ~:=~ \frac{1}{(n+1)h}\sum_{i=1}^{n+1} k\left(\frac{s - Z_j}{h}\right),
\]
for a kernel function $k(\cdot)$ and bandwidth $h > 0$. A similar density estimator can be constructed in normed spaces. The transformation~\eqref{eq:density-transformation} was considered in~\cite{MR3174619} to construct asymptotically optimal prediction sets in $\mathbb{R}^d$. Here optimality is in terms of smallest volume or Lebesgue measure.
\item \textbf{Regression Residual:} If $\mathcal{Z} = \mathbb{R}\times\mathcal{X}$ and $Z_i = (Y_i, X_i), 1\le i\le n+1$ for a regression data, then an example transformation targeting the response is
\begin{equation}\label{eq:regression-residual}
g(z; Z_1, \ldots, Z_n, Z_{n+1}) ~:=~ |y - \widehat{\mu}(x; Z_1, \ldots, Z_{n+1})|,
\end{equation}
for $z = (y, x)$. Here $\widehat{\mu}(\cdot; Z_1, \ldots, Z_{n+1})$ represents a non-parametric regression mean estimator that depends permutation invariantly on $Z_1, \ldots, Z_{n+1}$. For example, it can be the kernel regression estimator
\[
\widehat{\mu}(x; Z_1, \ldots, Z_{n+1}) ~:=~ \frac{\sum_{j=1}^{n+1} Y_jk((x - X_j)/h)}{\sum_{j=1}^{n+1} k((x - X_j)/h)},
\]
for a kernel function $k(\cdot)$ and bandwidth $h > 0$. The transformation~\eqref{eq:regression-residual} leads to a non-trivial prediction region for the response but a trivial one for the predictors. This feature will be discussed later in Section~\ref{appsec:conformal-regression}.   
\end{itemize}
Under the permutation invariance condition~\eqref{eq:permutation-invariance-full-conformal}, Theorem~\ref{thm:transformations-exchangeability} implies that when $z = Z_{n+1}$, $$W_i(z) := g(Z_i;\,Z_1, \ldots, Z_n, z),\,1\le i\le n,\; W_{n+1}(z) := g(z;\,Z_1, \ldots, Z_n, z),$$ is a sequence of exchangeable real-valued random variables; see Proposition~\ref{prop:permutation-invariant-exchangeable} (of Appendix~\ref{appsec:auxiliary-results}) for a formal result. Hence applying Proposition~\ref{prop:conformal-prediction-main} to $W_i(z)$'s, we obtain the prediction region
\begin{equation}\label{eq:full-conformal-general-set}
\{z\in\mathcal{Z}:\, \mathrm{rank}(W_{n+1}(z);\, \{W_1(z), \ldots, W_n(z), W_{n+1}(z)\}) \le \lceil(n+1)(1-\alpha)\rceil\}.
\end{equation}
A distinguishing feature of this prediction region compared to the one from Pseudocode 1, and in~\eqref{eq:symmetric-real-set},~\eqref{eq:symmetric-normed-set}, is that there is no closed form expression. Note that all the $W_i$'s depend on the unknown $z = Z_{n+1}$. Computing the region~\eqref{eq:full-conformal-general-set}, in general, requires computing $g(\cdot;\, Z_1, \ldots, Z_n, z)$ for all $z\in\mathcal{Z}$ and then verifying the rank condition in~\eqref{eq:full-conformal-general-set}. Hence, the prediction region~\eqref{eq:full-conformal-general-set} is, in general, computationally inefficient. It should be mentioned that there do exist cases where the full conformal prediction set~\eqref{eq:full-conformal-general-set} can be computed efficiently; see~\cite{burnaev2014efficiency}, \cite{chen2018discretized}, \cite{lei2019fast}, and~\cite{ndiaye2019computing} for some examples.  

Because of the heavy computational cost of the full conformal method, we now focus on the split conformal method that provides the same finite sample validity guarantees and is computationally efficient.
}
% If $W_1, \ldots, W_n, W_{n+1}\in\mathcal{W}$ for some space $\mathcal{W}$, then for any (data independent) function $g:\mathcal{W}\to\mathbb{R}$, the prediction set
% \[
% \{w\in\mathbb{R}:\,\mathrm{rank}(g(w); \{g(W_1), \ldots, g(W_n), g(w)\}) \le \lceil(n+1)(1-\alpha)\rceil\},
% \]
% satisfies the guarantee~\eqref{eq:prediction-guarantee}.}

% The procedure described as Pseudocode 1 yields a one-sided interval. Because $\xi$ is a very small constant, the prediction interval is essentially $(-\infty, W^*_{(I)}]$. In practice, one might want a two-sided prediction interval. A quirk of the method is that valid symmetric prediction intervals centered at the mean cannot be obtained from this approach. The reason is as follows: If $\widebar{W}_n$ represents the average of $W_1, \ldots, W_n$, then $|W_1 - \widebar{W}_n|, \ldots, |W_n - \widebar{W}_n|$ are exchangeable which can be verified readily from Theorem~\ref{thm:transformations-exchangeability}. However, $|W_{n+1} - \widebar{W}_n|$ is not exchangeable with any $|W_i - \widebar{W}_n|$ as can be shown with Theorem~\ref{thm:transformations-exchangeability}. A simpler way to resolve this issue is by sample splitting, which will be described next.
%%%%%%%%%%%%%%%%%%%%%%%%%%%%%%%%%%%%%%%%%%%%%%%%%%%%%%%%%%%%%
%%%%%%%%%%%%%%%%%%%%%%%%%%%%%%%%%%%%%%%%%%%%%%%%%%%%%%%%%%%%%
\subsection{Split Conformal Prediction}\label{sec:split-conformal}
\subsubsection{Real-valued Random Variables.}
Following~\cite{papadopoulos2002inductive} and~\cite{lei2018distribution}, we now discuss a split conformal prediction method which can be used to construct efficient prediction regions without the computational burden of the full conformal method. The procedure, in words, is as follows. We split the exchangeable sequence into two parts, and from the first part, we compute the average. Then the variables in the second part (along with the future variable) centered at the average of the first part are exchangeable, which leads us to a two-sided prediction interval. Formally,
given random variables $Z_1, \ldots, Z_n\in\mathbb{R}$ and $n_1 \in [n]$, construct the split
\begin{equation}\label{eq:splitting}
\begin{split}
\mathcal{T} ~&:=~ \{Z_1, \ldots, Z_{n_1}\},\quad\mbox{and}\quad
\mathcal{C} ~:=~ \{Z_{n_1 + 1}, \ldots, Z_{n}\}.
\end{split}  
\end{equation}
(The notations $\mathcal{T}$ and $\mathcal{C}$ stand for training and calibration sets, respectively.) From the training set, compute $\widebar{Z}_{\mathcal{T}}$, the average of the observations in~$\mathcal{T}$. From the calibration set, compute the random variables
$|Z_{n_1 + 1} - \widebar{Z}_{\mathcal{T}}|, \ldots, |Z_{n} - \widebar{Z}_{\mathcal{T}}|.$
Proposition~\ref{prop:split-exchangeable} (of Appendix~\ref{appsec:auxiliary-results}) implies that these random variables are exchangeable with $|Z_{n+1} - \widebar{Z}_{\mathcal{T}}|$. Now applying Proposition~\ref{prop:conformal-prediction-main} with 
\begin{equation}\label{eq:mean-centering-split-conformal}
W_1 := |Z_{n_1 + 1} - \widebar{Z}_{\mathcal{T}}|,\; \ldots,\; W_{n-n_1} := |Z_{n} - \widebar{Z}_{\mathcal{T}}|
\end{equation}
yields the prediction region: 
\begin{equation}\label{eq:real-split-conformal-mean}
\widehat{\mathcal{R}}_{n,\alpha}^{\mathrm{split}} ~:=~ \left\{z\in\mathbb{R}:\, |z - \widebar{Z}_{\mathcal{T}}| + \xi U_{n+1} \le W_{i_{\lceil (n-n_1+1)(1-\alpha)\rceil}}^*\right\}.
\end{equation}
Here $U_{n+1}\sim U[0, 1]$, $W^*$ represents the jittered sequence in Definition~\ref{def:jittered-sequence}, and $W_{(\lceil (n-n_1+1)(1-\alpha)\rceil)}^*$ represents the $\lceil (n-n_1+1)(1-\alpha)\rceil$-th largest value among $W_1^*, \ldots, W_{n-n_1}^*$.
\begin{prop}\label{lem:split-conformal-basic}
If $Z_1, \ldots, Z_{n+1}\in\mathbb{R}$ form a sequence of exchangeable random variables, then for all $n \ge n_1\ge1$ and $\alpha\in[0, 1]$,
\[
1 - \alpha ~~\le~~ \mathbb{P}\left(Z_{n+1}~\in~\widehat{\mathcal{R}}_{n,\alpha}^{\mathrm{split}}\right) ~~\le~~ 1 - \alpha + \frac{1}{n-n_1+1},
\] 
where the probability extends over all variables, including $Z_1, \ldots, Z_n$ used to construct $\widehat{\mathcal{R}}_{n,\alpha}^{\mathrm{split}}.$
\end{prop}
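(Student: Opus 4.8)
The plan is to reduce the claim to the full-conformal guarantee of Proposition~\ref{prop:conformal-prediction-main}, applied not to the $Z_i$ themselves but to the sequence of mean-centered residuals. Write $m := n - n_1$ and consider the $m+1$ real-valued variables $W_1, \ldots, W_m$ from~\eqref{eq:mean-centering-split-conformal} together with the test residual $W_{m+1} := |Z_{n+1} - \widebar{Z}_{\mathcal{T}}|$. The first task is to verify that $W_1, \ldots, W_{m+1}$ is an exchangeable sequence, which is exactly the content of Proposition~\ref{prop:split-exchangeable} but can also be read off directly from Theorem~\ref{thm:transformations-exchangeability}: the map $G:(z_1,\ldots,z_{n+1})\mapsto(|z_{n_1+1}-\widebar{z}_{\mathcal{T}}|,\ldots,|z_{n+1}-\widebar{z}_{\mathcal{T}}|)$ sends the exchangeable vector $(Z_1,\ldots,Z_{n+1})$ to $(W_1,\ldots,W_{m+1})$, and for any permutation $\pi_1$ of its $m+1$ output coordinates one may take $\pi_2$ to be the permutation of $[n+1]$ that permutes the calibration/test indices $\{n_1+1,\ldots,n+1\}$ according to $\pi_1$ and fixes the training indices $\{1,\ldots,n_1\}$. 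Since $\widebar{z}_{\mathcal{T}}$ depends only on the fixed training coordinates, it is left invariant by $\pi_2$, so $\pi_1 G(w) = G(\pi_2 w)$ and condition~\eqref{eq:permutation-condition} holds.

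With exchangeability in hand, I would apply Proposition~\ref{prop:conformal-prediction-main} to the $m+1$ residuals, treating $W_{m+1}$ as the future observation and $m = n-n_1$ as the sample size. This immediately yields
\[
1-\alpha ~\le~ \mathbb{P}\big(W_{m+1}\in\widehat{\mathcal{R}}_{m,\alpha}\big) ~\le~ 1-\alpha+\frac{1}{m+1} ~=~ 1-\alpha+\frac{1}{n-n_1+1},
\]
where, per~\eqref{eq:conformal-prediction-main}, the region is $\widehat{\mathcal{R}}_{m,\alpha}=\{w:\mathrm{rank}(w;\{W_1,\ldots,W_m,w\})\le I\}$ with $I:=\lceil(m+1)(1-\alpha)\rceil=\lceil(n-n_1+1)(1-\alpha)\rceil$. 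These are exactly the two-sided bounds asserted in the proposition, so it remains only to match the two descriptions of the coverage event.

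The last step is to show that $\{W_{m+1}\in\widehat{\mathcal{R}}_{m,\alpha}\}$ coincides with $\{Z_{n+1}\in\widehat{\mathcal{R}}_{n,\alpha}^{\mathrm{split}}\}$. Unwinding~\eqref{eq:real-split-conformal-mean}, the event $Z_{n+1}\in\widehat{\mathcal{R}}_{n,\alpha}^{\mathrm{split}}$ is $|Z_{n+1}-\widebar{Z}_{\mathcal{T}}|+\xi U_{n+1}\le W^*_{(I)}$, i.e. the jittered test residual $W^*_{m+1}$ does not exceed the $I$-th smallest of the jittered calibration residuals $W_1^*,\ldots,W_m^*$; since the jittered values are almost surely distinct, Definition~\ref{def:rank-definition} shows this is precisely the statement $\mathrm{rank}(W_{m+1};\{W_1,\ldots,W_m,W_{m+1}\})\le I$, which is membership of $W_{m+1}$ in $\widehat{\mathcal{R}}_{m,\alpha}$. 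Chaining the three steps then gives the claim. I expect this final identification to be the main obstacle: one must handle the jittering of Definition~\ref{def:jittered-sequence} consistently between the order-statistic form of~\eqref{eq:real-split-conformal-mean} and the rank form of~\eqref{eq:conformal-prediction-main}, and address the boundary convention when $I>m$, in which case $W^*_{(I)}$ is read as $+\infty$, the region is all of $\mathbb{R}$, and coverage equals $1$; this still respects the stated bounds since $\alpha<1/(n-n_1+1)$ forces $1-\alpha+1/(n-n_1+1)\ge1$.
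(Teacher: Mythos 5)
Your proposal is correct and follows essentially the same route as the paper: establish exchangeability of the centered residuals $|Z_{n_1+1}-\widebar{Z}_{\mathcal{T}}|,\ldots,|Z_{n+1}-\widebar{Z}_{\mathcal{T}}|$ via Proposition~\ref{prop:split-exchangeable} (equivalently, the permutation condition~\eqref{eq:permutation-condition} of Theorem~\ref{thm:transformations-exchangeability} with $\pi_2$ fixing the training indices), and then invoke Proposition~\ref{prop:conformal-prediction-main} with sample size $n-n_1$. Your additional care in identifying the order-statistic form of~\eqref{eq:real-split-conformal-mean} with the rank form of~\eqref{eq:conformal-prediction-main}, and in handling the boundary case $\lceil(n-n_1+1)(1-\alpha)\rceil>n-n_1$, only makes explicit details the paper leaves implicit.
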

Although the mean $\widebar{Z}_{\mathcal{T}}$ is a natural choice in~\eqref{eq:real-split-conformal-mean} given the fact that the average is in some sense the best point predictor of a random variable,\footnote{The value $a\in\mathbb{R}$ that minimizes $\mathbb{E}[(Z - a)^2]$ (the prediction risk of $a$) is given by $\mathbb{E}[Z]$.} Proposition~\ref{lem:split-conformal-basic} continues to hold true if $\widebar{Z}_{\mathcal{T}}$ is  generalized by replacing it with $h(Z_1, \ldots, Z_{n_1})$ for any function $h$ of $Z_1, \ldots, Z_{n_1}$. This again is because of Proposition~\ref{prop:split-exchangeable}.
%%%%%%%%%%%%%%%%%%%%%%%%%%%%%%%%%%%%%%%%%%%%%%%%%%%%%%%%%%%%
%%%%%%%%%%%%%%%%%%%%%%%%%%%%%%%%%%%%%%%%%%%%%%%%%%%%%%%%%%%%
\subsubsection{Arbitrary Spaces.}

Suppose $Z_1, Z_2, \ldots, Z_{n+1}\in\mathcal{Z}$ are exchangeable random variables. Here $\mathcal{Z}$ can be a space of functions or a space of images or a space of documents. For any (non-random) transformation $g:\mathcal{Z}\to\mathbb{R}$,
\begin{equation}\label{eq:trasnformed-exchangeable}
g(Z_1),\, g(Z_2),\, \ldots,\, g(Z_n),\, g(Z_{n+1}),
\end{equation}
are real valued exchangeable random variables. Hence Proposition~\ref{lem:split-conformal-basic} applies to $W_i = g(Z_{n_1+i}), 1\le i\le n-n_1$ and leads to prediction regions
\[
\widehat{\mathcal{R}}_{n,\alpha}^{\mathrm{split}} ~:=~ \bigg\{z\in\mathcal{Z}:\,|g(z) - \widebar{g(Z)}_{\mathcal{T}}| \le W_{(\lceil (n - n_1 + 1)(1-\alpha)\rceil)}^* - \xi U_{n+1}\bigg\}.
\]
{\clr Here again $W^*_{(\lceil(n-n_1+1)(1-\alpha))\rceil}$ is the $\lceil(n-n_1+1)(1-\alpha)\rceil$-th largest value of $W_1, \ldots, W_{n-n_1}$.} 
It is noteworthy that for the prediction coverage validity of $\widehat{\mathcal{R}}_{n,\alpha}^{\mathrm{split}}$, the only requirement is that the random variables $g(Z_1), \ldots, g(Z_{n+1})$ are exchangeable and this can hold for some data-driven transformations. 
More precisely, the analyst is allowed to use the training part of the data to construct a transformation $\widehat{f}_{\mathcal{T}}(\cdot)$. Then exchangeability of $Z_1, \ldots, Z_{n+1}$ implies that $\widehat{f}_{\mathcal{T}}(Z_{n_1 + 1}), \ldots, \widehat{f}_{\mathcal{T}}(Z_{n+1})$ are exchangeable and can be used for constructing a prediction interval for $Z_{n+1}$.
% Interestingly, the transformation $g$ in~\eqref{eq:trasnformed-exchangeable} need not be non-random and it can depend arbitrarily on the training set $\mathcal{T} = \{Z_1, \ldots, Z_{n_1}\}$. 
This leads to the following result proved in Appendix~\ref{appsec:conformal-prediction}. (This result is given the status of a theorem because it shows the generality of conformal prediction.) Recall here that $\mathcal{Z}$ denotes the space in which the random variables $Z_1, \ldots, Z_n$ lie.
\begin{thm}\label{thm:main-split-conformal}
For any $\widehat{f}_{\mathcal{T}}:\mathcal{Z}\to\mathbb{R}$, an arbitrary function depending only on $\mathcal{T}$, define
\begin{equation}\label{eq:split-conformal-most-general}
\widehat{\mathcal{R}}_{n,\alpha}^{\mathrm{split}} ~:=~ \left\{z\in\mathcal{Z}:\,\widehat{f}_{\mathcal{T}}(z) ~\le~ (\widehat{f}_{\mathcal{T}}(Z))^*_{(\lceil (n-n_1+1)(1-\alpha)\rceil)} - \xi U_{n+1}\right\},
\end{equation}
where $(\widehat{f}_{\mathcal{T}}(Z))^*_{(\lceil (n-n_1+1)(1-\alpha)\rceil)}$ is the $\lceil(n-n_1+1)(1-\alpha)\rceil$-th largest value among the jittered sequence $(\widehat{f}_{\mathcal{T}}(Z_i))^*, i\in\{n_1+1,\ldots,n\}$. Then
\[
1 - \alpha ~~\le~~ \mathbb{P}\left(Z_{n+1}~\in~\widehat{\mathcal{R}}_{n,\alpha}^{\mathrm{split}}\right) ~~\le~~ 1 - \alpha + \frac{1}{n-n_1 + 1}.
\]
\end{thm}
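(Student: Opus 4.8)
The plan is to reduce the statement to the real-valued full conformal guarantee of Proposition~\ref{prop:conformal-prediction-main} by exhibiting the transformed calibration-and-future variables as an exchangeable sequence of $m := n - n_1 + 1$ real-valued random variables. The essential observation is that $\widehat{f}_{\mathcal{T}}$ is a deterministic function of its query point once the training indices $\{1,\ldots,n_1\}$ are fixed, and that these indices are disjoint from the calibration-plus-future block $\{n_1+1,\ldots,n+1\}$.

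First I would set $W_i := \widehat{f}_{\mathcal{T}}(Z_{n_1+i})$ for $i\in[m]$, so that $W_m = \widehat{f}_{\mathcal{T}}(Z_{n+1})$, and argue that $(W_1,\ldots,W_m)$ is exchangeable. To do this I would regard the assignment $(z_1,\ldots,z_{n+1}) \mapsto (\widehat{f}(z_{n_1+1}; z_1,\ldots,z_{n_1}),\ldots,\widehat{f}(z_{n+1}; z_1,\ldots,z_{n_1}))$ as a deterministic transformation $G:\mathcal{Z}^{n+1}\to\mathbb{R}^m$ and verify the permutation condition~\eqref{eq:permutation-condition} of Theorem~\ref{thm:transformations-exchangeability}: given any permutation $\pi_1$ of $[m]$ acting on the output coordinates, take $\pi_2$ to be the permutation of $[n+1]$ that fixes $1,\ldots,n_1$ and permutes $n_1+1,\ldots,n+1$ according to $\pi_1$. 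Since $\pi_2$ leaves the training block untouched, the function $\widehat{f}_{\mathcal{T}}$ is unchanged and only its arguments are permuted, so $\pi_1 G(w) = G(\pi_2 w)$ holds for all $w$; Theorem~\ref{thm:transformations-exchangeability} then yields exchangeability of $(W_1,\ldots,W_m)$. This is exactly the content of Proposition~\ref{prop:split-exchangeable}, which I would cite directly.

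Next I would apply Proposition~\ref{prop:conformal-prediction-main} to the $m$ exchangeable variables $W_1,\ldots,W_m$, with $m$ playing the role of $n+1$ there. By the definition~\eqref{eq:split-conformal-most-general}, the coverage event $Z_{n+1}\in\widehat{\mathcal{R}}_{n,\alpha}^{\mathrm{split}}$ is exactly the event $\widehat{f}_{\mathcal{T}}(Z_{n+1}) + \xi U_{n+1} \le (\widehat{f}_{\mathcal{T}}(Z))^*_{(\lceil m(1-\alpha)\rceil)}$, which is precisely the rank condition $\mathrm{rank}(W_m;\{W_1,\ldots,W_m\}) \le \lceil m(1-\alpha)\rceil$ appearing in that proposition. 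Proposition~\ref{prop:conformal-prediction-main} then delivers both the lower bound $1-\alpha$ and the upper bound $1-\alpha + 1/m = 1-\alpha + 1/(n-n_1+1)$, completing the proof.

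The main obstacle is the exchangeability step, and specifically the care needed to handle the data-dependent transformation correctly: it is tempting but wrong to permute all $n+1$ indices, since $\widehat{f}_{\mathcal{T}}$ is \emph{not} symmetric in the training points relative to the calibration points. The argument goes through only because $\pi_2$ is restricted to the calibration-plus-future block, leaving $\widehat{f}_{\mathcal{T}}$ invariant; this is exactly where sample splitting is used, and it is what distinguishes this valid construction from the third, non-exchangeable example of Section~\ref{subsec:transformations}. The remaining steps are bookkeeping, namely matching the jittered order statistic to the rank condition and confirming the indexing $m = n - n_1 + 1$ in the ceiling and in the $1/m$ slack term.
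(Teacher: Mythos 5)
Your proposal is correct and follows essentially the same route as the paper's own proof: invoke Proposition~\ref{prop:split-exchangeable} to get exchangeability of $\widehat{f}_{\mathcal{T}}(Z_{n_1+1}),\ldots,\widehat{f}_{\mathcal{T}}(Z_{n+1})$, then apply Proposition~\ref{prop:conformal-prediction-main} to these $n-n_1+1$ real-valued variables. Your additional unpacking of Proposition~\ref{prop:split-exchangeable} via the permutation condition of Theorem~\ref{thm:transformations-exchangeability} (with $\pi_2$ fixing the training block) matches the paper's appendix proof of that proposition, so the whole argument is a faithful, slightly more detailed version of what the paper does.
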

In words, Theorem~\ref{thm:main-split-conformal} implies that if we find a real-valued transformation based on the training set and construct a prediction region based on the transformed calibration set, then it is also a valid $(1-\alpha)$ prediction, under exchangeability of the set of random variables $Z_1, \ldots, Z_{n+1}$.

{\clr  Theorem 3.1 is not new and is well-known in the conformal prediction literature. See, for example, Theorem 2 of~\cite{lei2018distribution}. In the language of conformal scores, the transformation $\widehat{f}_{\mathcal{T}}(Z_i)$ would be a conformal score corresponding to $Z_i$. As can be seen from Theorem 3.1, no specific properties of this transformation are required for validity. Because we are constructing a prediction set that contains smaller values of $\widehat{f}_{\mathcal{T}}(\cdot)$, the set $\widehat{\mathcal{R}}_{n,\alpha}^{\mathrm{split}}$ would be sensible if a smaller value of $\widehat{f}_{\mathcal{T}}(z)$ corresponds to $z$ ``conforming'' with the training data. For instance, the split conformal prediction set~\eqref{eq:real-split-conformal-mean} is constructed based on the conformal score $\widehat{f}_{\mathcal{T}}(z) = |z - \widebar{Z}_{\mathcal{T}}|$. A smaller value here means $z$ is close to the ``center'' of training data --- conforming. A larger value means $z$ is away from the training data --- not conforming.}
\begin{tcolorbox}
\noindent\textbf{Pseudocode 2:} Computationally, Theorem~\ref{thm:main-split-conformal} works in practice as follows:
\begin{enumerate}
    \item Split the data $Z_1, \ldots, Z_n$ into two parts: $\mathcal{T}$ and $\mathcal{C}$ as in~\eqref{eq:splitting}.
    \item Based on the training data $\mathcal{T}$, find a transformation $\widehat{f}_{\mathcal{T}}:\mathcal{Z}\to\mathbb{R}$. See Section~\ref{subsec:examples-conformal} for some examples of $\widehat{f}_{\mathcal{T}}$.
    \item Take $\xi = 10^{-8}$ and generate $U_1, \ldots, U_{n+1}$ from $\mathrm{Unif}[-1,1]$ distribution. Define the jittered sequence (based on the elements in the calibration set) as
    $W_i^* := \widehat{f}_{\mathcal{T}}(Z_{i}) + \xi U_i$, $n_1+1 \le i \le n.$
    \item Sort the observations $W_i$ as
    $W_{(1)}^* \le W_{(2)}^* \le \cdots \le W_{(n-n_1+1)}^*.$    
    \item Report the prediction region
    \[
    \left\{z\in\mathcal{Z}:\,\widehat{f}_{\mathcal{T}}(z) ~\le~ W_{(\lceil (n-n_1+1)(1-\alpha)\rceil)}^* - \xi U_{n+1}\right\}.
    \]
\end{enumerate}
\end{tcolorbox}
%%%%%%%%%%%%%%%%%%%%%%%%%%%%%%%%%%%%%%%%%%%%%%%%%%%%%%%%%%%%%
%%%%%%%%%%%%%%%%%%%%%%%%%%%%%%%%%%%%%%%%%%%%%%%%%%%%%%%%%%%%%
\subsubsection{Some Concrete Examples.}\label{subsec:examples-conformal} In the following, we present a few concrete examples/applications of Theorem~\ref{thm:main-split-conformal}. Note that the problem is still prediction: given $Z_1, \ldots, Z_n$, we want to predict $Z_{n+1}$. In the context of Theorem 3.1, we are essentially doing this based on $\widehat{f}_{\mathcal{T}}(Z_{n_1+1}), \ldots, \widehat{f}_{\mathcal{T}}(Z_{n})$. In the following examples, we describe the construction of some useful $\widehat{f}_{\mathcal{T}}(\cdot)$. The construction of the prediction region is based on Theorem~\ref{thm:main-split-conformal}. Recall here that $\mathcal{Z}$ denotes the space in which the random variables $Z_1, \ldots, Z_n$ lie.
\begin{enumerate}[leftmargin=*]
  \item \textbf{Norm-ball around the Mean:} Suppose $\mathcal{Z}=\mathbb{R}^d$ and let $\widebar{Z}_{\mathcal{T}}$ be the average of observations in $\mathcal{T}$, the training set. Take $\widehat{f}_{\mathcal{T}}(Z_i) = \|Z_i - \widebar{Z}_{\mathcal{T}}\|$ for $i\in\{n_1+1,\ldots,n\}$. Here $\|\cdot\|$ can be any semi-norm in $\mathbb{R}^d$; for instance the Euclidean norm or the Manhattan norm or the $\ell_p$-norm or even the absolute value of a single coordinate. 
  % This example is same as the one in Proposition~\ref{lem:split-conformal-basic} when $d = 1$. 
  In the multivariate case, calculating the norm may not be meaningful if different coordinates of $Z$ have different units. If this is the case, then one can compute the norm of ``whitened'' vectors. More precisely, let $\widehat{\Sigma}_{\mathcal{T}}$ represent the sample covariance based on $\mathcal{T}$ and take $\widehat{f}_{\mathcal{T}}(Z_i) := \|\widehat{\Sigma}_{\mathcal{T}}^{-1/2}(Z_i - \widebar{Z}_{\mathcal{T}})\|$. If $\widehat{\Sigma}_{\mathcal{T}}$ is not invertible (which can happen if dimension is larger than $n_1$), then take $$\widehat{f}_{\mathcal{T}}(Z_i) := \|\mathrm{diag}(\widehat{\Sigma}_{\mathcal{T}})^{-1/2}(Z_i - \widebar{Z}_{\mathcal{T}})\|.$$
  Here $\mathrm{diag}(\widehat{\Sigma}_{\mathcal{T}})$ is the diagonal matrix corresponding to $\widehat{\Sigma}_{\mathcal{T}}.$ 
  % In the normal range of blood panels example, a useful function $\|\cdot\|$ is the absolute maximum norm (or the $\ell_{\infty}$-norm). 
  % The beauty of conformal prediction becomes evident here. It does not matter which $\widehat{f}_{\mathcal{T}}$ the practitioner uses, the validity guarantee of Theorem~\ref{thm:main-split-conformal} holds true.
  %%%%%%%%%%%%%%%%%%%%%%%%%%%%%%%%%%%%%%%%%%%%%%%%%%%
  %%%%%%%%%%%%%%%%%%%%%%%%%%%%%%%%%%%%%%%%%%%%%%%%%%%    
  \item \textbf{Principal Component Analysis (PCA):} In the previous example, the function $\widehat{f}_{\mathcal{T}}$ uses all the coordinates of $Z_i$ with no regard to the coordinates that matter more. For cases where the distribution of $Z_i$'s is supported on a low-dimensional manifold, this might be wasteful. One way to account for low-dimensionality is by using a dimension reduction technique on $\mathcal{T}$. For one concrete example, when $\mathcal{Z} = \mathbb{R}^d$, apply PCA on $\mathcal{T} = \{Z_1, \ldots, Z_{n_1}\}$ and fix $1\le \widehat{k}\le d$ to be the number of principal components (PCs) to be used. (This choice of $\widehat{k}$ can be based on any rule as long as it depends only on $\mathcal{T}$.) Let the first $\widehat{k}$ PCs be written into a matrix $\Pi\in\mathbb{R}^{\widehat{k}\times d}$ and consider 
  \[
  \widehat{f}_{\mathcal{T}}(Z_i) ~:=~ \|\mathrm{diag}(\widehat{\Sigma}_{\Pi, \mathcal{T}})^{-1/2}(\Pi Z_i - \widebar{(\Pi Z)}_{\mathcal{T}})\|.
  \]
  Here $\widehat{\Sigma}_{\Pi, \mathcal{T}}$ and $\widebar{(\Pi Z)}_{\mathcal{T}}$ represent the sample covariance matrix and sample average of $\Pi Z_1, \ldots, \Pi Z_{n_1}$. The semi-norm $\|\cdot\|$ above is arbitrary as in the previous example. Similar to the previous example, PCA is not special here and any of the many existing dimension reduction (linear or non-linear) techniques~\citep{cunningham2008dimension,xie2017survey,sorzano2014survey,nguyen2019ten,hinton2006reducing,wang2014generalized,tenenbaum2000global,silva2003global} can be used to get $\Pi Z_i$. In the context of functional data,~\cite{lei2015conformal} propose a few examples of $\Pi.$
  %%%%%%%%%%%%%%%%%%%%%%%%%%%%%%%%%%%%%%%%
  %%%%%%%%%%%%%%%%%%%%%%%%%%%%%%%%%%%%%%%% 
  \item \textbf{Level Sets:} The prediction regions mentioned in the discussions before are all convex sets; in $\mathbb{R}$, these are intervals. These may, however, not be the optimal ones. For example, if the true distribution of $Z_i$ is a mixture of $N(3, 1)$ and $N(7, 1)$, then the optimal prediction region is a union of two intervals centered at the two modes; see Figure~\ref{fig:density_mixture_normals}. This is similar to the definition of high density regions, popular in Bayesian statistics; see~\cite{hyndman1996computing} for details.
  \begin{figure}[!h]
  \centering
  \includegraphics[height=2.7in,width=0.8\textwidth]{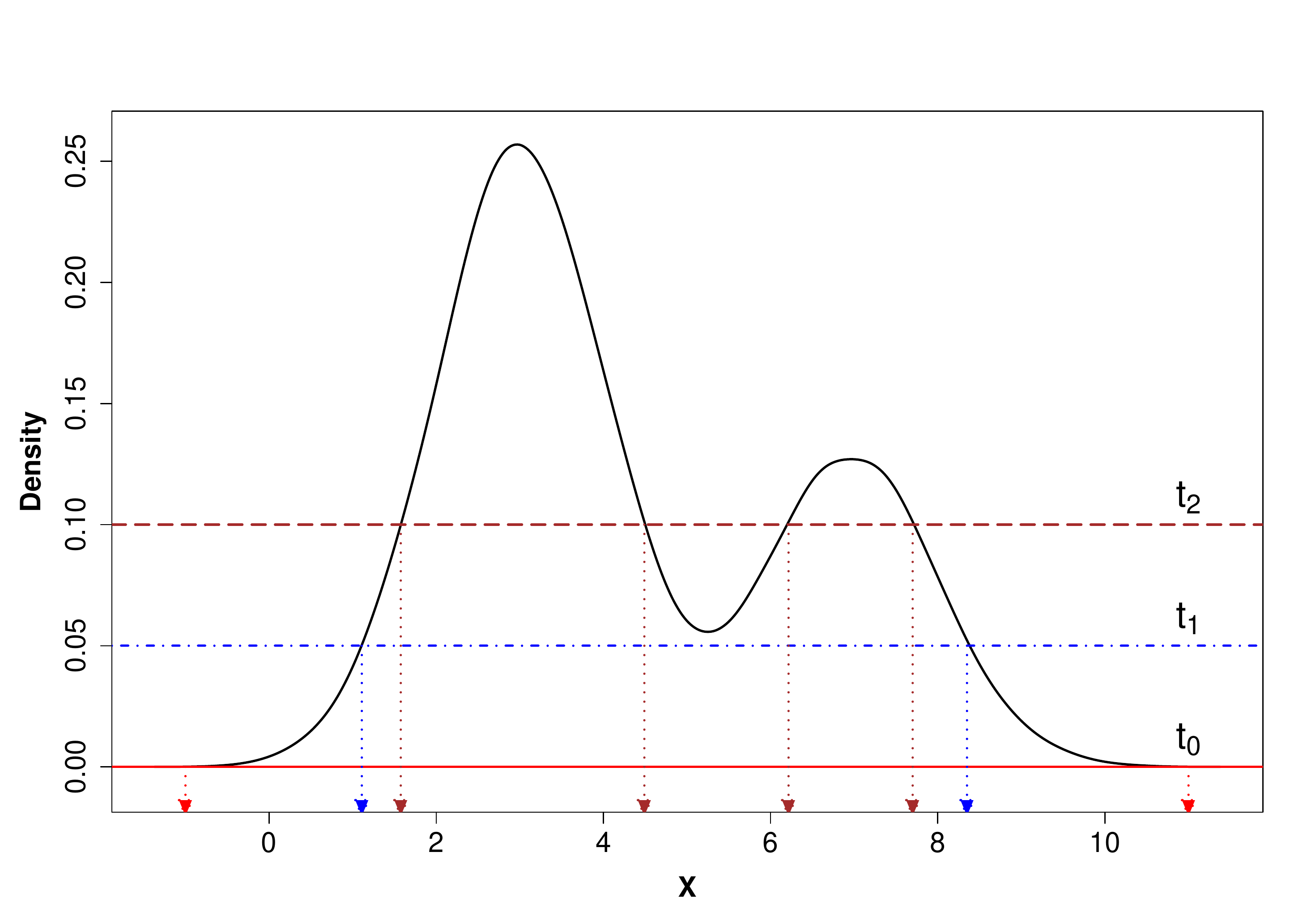}
  \caption{Illustration of level sets with the density of mixture of Gaussians centered at $3$ and $7$.}
  \label{fig:density_mixture_normals}
  \end{figure} 
  If the density of $Z_i$ is $p(\cdot)$, then the (oracle) optimal prediction region is given by
  \begin{equation}\label{eq:optimal-prediction-region}
  \mathcal{R}^{\mathrm{opt}} := \{z\in\mathcal{Z}:\,p(z) \ge t_{\alpha}\}, 
  \end{equation}
  where $t_{\alpha}$ is the largest $t$ solving $\int_{z:p(z) \ge t} p(z)dz \ge 1 - \alpha$. Sets of this type are called level sets. Of course, in practice we do not know the density $p$ and it might not even exist (with respect to the Lebesgue measure). One way to imitate this optimal prediction region is by taking
  \begin{equation}\label{eq:optimal-prediction-region-est}
  \widehat{f}_{\mathcal{T}}(Z_i) := 1/\widehat{p}_{\mathcal{T}}(Z_i),
  \end{equation}
  where $\widehat{p}_{\mathcal{T}}(\cdot)$ is an estimate of the density based on $Z_1, \ldots, Z_{n_1}$. Combining with the previous example, the density estimator here can be performed after an initial dimension reduction step. Furthermore, any of the existing density estimation methodologies (along with tuning parameter selection methods) can be used. The validity guarantees of conformal prediction does not depend on the estimation accuracy of the dimension reduction or density estimation methods. It can be proved that~\eqref{eq:optimal-prediction-region-est} along with Theorem~\ref{thm:main-split-conformal} leads to prediction regions that ``converge'' to the optimal one~\eqref{eq:optimal-prediction-region}; see~\cite{MR3174619} for details. 
\end{enumerate}
In some of the examples above, we have made an assumption that $\mathcal{Z}$ (the space in which $Z_1, \ldots, Z_n$ lie) is the Euclidean space $\mathbb{R}^d$. Some real data examples, such as image classification or topic modeling or text mining, do not satisfy this assumption readily. In all these examples, however, classical machine learning algorithms first convert the image or text data into a high-dimensional real-valued vector. Dimension is not an issue for conformal prediction because the validity is finite sample.
%%%%%%%%%%%%%%%%%%%%%%%%%%%%%%%%%%%%%%%%%%%%%%%%%%%%%%%%%%%%%
%%%%%%%%%%%%%%%%%%%%%%%%%%%%%%%%%%%%%%%%%%%%%%%%%%%%%%%%%%%%%
% \subsection{}
% One of the most interesting applications of conformal prediction is in regression and classification where the focus is on constructing a prediction region for the response. We discuss this problem in Appendix~\ref{appsec:conformal-regression}.
\subsection{Conformal Prediction for Regression}\label{appsec:conformal-regression}
{\clr In previous subsections, we have discussed the problem of prediction with no side information, i.e., we do not have any information at the future random variable. Most classical prediction algorithms in machine learning and statistics have covariate information for future random variable and the response is to be predicted. This forms one of the most interesting applications of conformal prediction}. Here the information in each observation is in two parts: covariates or predictors or features ($X$) and the response or class ($Y$). Let $Z_1 = (X_1, Y_1), \ldots, Z_n = (X_n, Y_n)$ be $n$ observations from a space $\mathcal{X}\times\mathcal{Y}$ and we want to predict $Y_{n+1}$ for $X_{n+1}\in\mathcal{X}$, whenever $(X_{n+1}, Y_{n+1})$ is exchangeable with $(X_i, Y_i), 1\le i\le n$. Formally, the goal is to construct $\widehat{\mathcal{R}}_{n,\alpha}(X_{n+1})$ such that
\begin{equation}\label{eq:marginal-conformal-regression}
\mathbb{P}\left(Y_{n+1} \in \widehat{\mathcal{R}}_{n,\alpha}(X_{n+1})\right) ~\ge~ 1 - \alpha.
\end{equation}
The probability on the left hand side is with respect to $(X_{n+1}, Y_{n+1})$ and also with respect to $(X_i, Y_i), i\in[n]$.
\vspace{0.1in}

\noindent\textbf{Marginal versus Conditional Coverage.} Mathematically, there is nothing wrong with the formulation~\eqref{eq:marginal-conformal-regression}, however, it is notationally misleading because of the alternative goal
\begin{equation}\label{eq:conditional-conformal-regression}
\mathbb{P}\left(Y_{n+1}\in\widehat{\mathcal{R}}_{n,\alpha}(X_{n+1})~\big|~X_{n+1} = x\right) \ge 1 - \alpha,
\end{equation}
for all $x\in\mathcal{X}$. Formulation~\eqref{eq:marginal-conformal-regression} provides a prediction region that covers $Y_{n+1}$ whenever $(X_{n+1}, Y_{n+1})$ comes from the same distribution. Formulation~\eqref{eq:conditional-conformal-regression}, however, requires a prediction region that covers $Y_{n+1}$ whatever the value of $X_{n+1}$ is. To understand the philosophical difference between these two goals, consider the following scenario. Suppose we have bivariate classification data with body mass index (BMI) as the covariate and indicator for the presence of cancer cells as the response. The goal~\eqref{eq:conditional-conformal-regression} provides a region such that whoever the next patient is, his/her response would lie in the constructed region with probability of at least $1 - \alpha$. {\clr If we get 100 new patients, then~\eqref{eq:conditional-conformal-regression} implies that among the patients with BMI$ = 15$ (say), the proportion of patients for which the true response lies in the constructed prediction set is about $1-\alpha$ and the same is true among the patients with BMI$ = 15.7, 16,$ and so on (any real number).} On the other hand, the goal~\eqref{eq:marginal-conformal-regression} cannot guarantee this but only implies that if we have 100 new patients and we give a region for each patient, then out of these 100 patients, for about $(1-\alpha)$ proportion of them the true response lies in their corresponding region. {\clr Importantly, among the patients with BMI$ =15$ (say), there is no specific control on the proportion of patients for which the true response lies in the constructed prediction set.}

It is easy to show that~\eqref{eq:conditional-conformal-regression} implies~\eqref{eq:marginal-conformal-regression}. It turns out that~\eqref{eq:conditional-conformal-regression} is too ambitious a goal in that it cannot be attained, non-trivially, in finite samples in a distribution-free setting; see~\citet[Section 2.6.1]{balasubramanian2014conformal} and~\cite{barber2019limits}. The latter reference discusses alternative conditional goals that are attainable sensibly. 
In this section, we will restrict attention to marginal validity as in~\eqref{eq:marginal-conformal-regression} and refer to~\cite{barber2019limits} for details on conditional validity~\eqref{eq:conditional-conformal-regression}.

\vspace{0.1in}
\noindent\textbf{Cross-sectional Conformal Method.} One simple way to attain the guarantee~\eqref{eq:marginal-conformal-regression} is as follows. Based on $(X_1, Y_1)$, $\ldots$, $(X_n, Y_n)\in\mathcal{X}\times\mathcal{Y}$, construct a prediction region $\widehat{\mathcal{R}}_{n,\alpha}\subseteq\mathcal{X}\times\mathcal{Y}$. Any of the constructions mentioned in Subsection~\ref{sec:split-conformal} can be used. These regions satisfy
\begin{equation}\label{eq:region-joint}
\mathbb{P}\left((X_{n+1}, Y_{n+1})\in \widehat{\mathcal{R}}_{n,\alpha}\right) ~\ge~ 1 - \alpha,
\end{equation}
and hence
\begin{equation}\label{eq:region-y-separate}
\mathbb{P}\left(Y_{n+1} \in \widehat{\mathcal{R}}_{n,\alpha}(X_{n+1})\right) ~\ge~ 1 - \alpha,
\end{equation}
where
\[
\widehat{\mathcal{R}}_{n,\alpha}(X_{n+1}) := \{y\in\mathcal{Y}:\,(X_{n+1}, y)\in\widehat{\mathcal{R}}_{n,\alpha}\}.
\]
This set is a cross-section of $\widehat{\mathcal{R}}_{n,\alpha}$ at $X_{n+1}$. See Figure~\ref{fig:cross-section} for an illustration of this method of attaining~\eqref{eq:marginal-conformal-regression}.
\begin{figure}[!h]
\centering
\hspace{-0.2in}\includegraphics[height=3in]{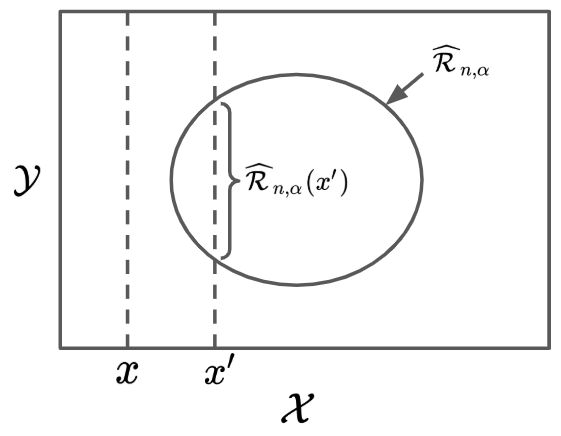}
\caption{Illustration of Cross-sectional Conformal Prediction for Regression and Classification.}
\label{fig:cross-section}
\end{figure}
In light of the discussion regarding conditional and marginal validity, we prefer writing~\eqref{eq:region-joint} instead of~\eqref{eq:region-y-separate} for clarity. The notation~\eqref{eq:region-joint} makes it clear that the construction of the region has nothing to do with the realized value of $X_{n+1}$. {\clr This also shows that there is no formal difference between conformal prediction for regression and conformal prediction for general spaces. The only difference is in the construction of the transformation; now the transformation focuses on how well the response value conforms with the training data for the realized value of $X_{n+1}$. }

Figure~\ref{fig:cross-section} shows a commonly mentioned disadvantage of this cross-sectional method. For some values of $X_{n+1}$, $\widehat{\mathcal{R}}_{n,\alpha}(X_{n+1})$ can be empty. For example, in Figure~\ref{fig:cross-section}, $\widehat{\mathcal{R}}_{n,\alpha}(x) = \emptyset$ and $\widehat{\mathcal{R}}_{n,\alpha}(x')\neq\emptyset$. From a different view point, this is more of an advantage than a disadvantage because the validity guarantee~\eqref{eq:marginal-conformal-regression} requires $X_{n+1}$ from the same distribution as the $X$'s in the training data. Hence, an empty cross-section at $x$ as in Figure~\ref{fig:cross-section} actually informs the practitioner that the value $x$ is not likely, given the training values of $X$'s. This might be useful in raising a red flag when the practitioner is about to do extrapolation. For determining extrapolation, it might be better to first perform a dimension reduction that highlights the information in $X$ pertaining to predicting $Y$; this dimension reduction map can be based on the training split of the data.  

\vspace{0.1in}
\noindent\textbf{Conformal Regions with Focus on Response.}
The cross-sectional conformal prediction gives equal weight to the covariates and the response in that the prediction region $\widehat{\mathcal{R}}_{n,\alpha}$ does not focus more on either. In the regression or classification context, one might want to focus more on the response and ignore prediction for the covariates. Ignoring covariates for prediction means that the coverage guarantee for covariates can be 1, or in other words, the covariate cross-section of the region $\widehat{\mathcal{R}}_{n,\alpha}$ is the whole space of covariates. We will now provide ways to apply Theorem~\ref{thm:main-split-conformal} for prediction in regression and classification settings with the motivation above. This means that we will provide some concrete ways of designing $\widehat{f}_{\mathcal{T}}$ for the regression and classification framework.
\begin{itemize}[leftmargin=*]
\item \emph{Conditional Mean Estimation:} Based on the training data $\{(X_i, Y_i):1\le i\le n_1\}\subseteq\mathcal{X}\times\mathbb{R}$, estimate the conditional mean $\mathbb{E}[Y|X = x]$. Let the estimate be $\widehat{\mu}_{\mathcal{T}}(\cdot)$ based on the training data $\mathcal{T}$. Define, for $Z_i = (X_i, Y_i)$,
  \[
  \widehat{f}_{\mathcal{T}}(Z_i) ~:=~ |Y_i - \widehat{\mu}_{\mathcal{T}}(X_i)|,\quad n_1 + 1 \le i\le n+1.
  \]
  This yields a prediction region of the form $[\widehat{\mu}_{\mathcal{T}}(X_{n+1}) - t, \widehat{\mu}_{\mathcal{T}}(X_{n+1}) + t]$ for some $t$. The validity guarantee holds true irrespective of what $\widehat{\mu}_{\mathcal{T}}(\cdot)$ is. Any algorithm can be used to get an estimator; it may not even be consistent for $\mathbb{E}[Y|X = \cdot]$. Unlike the cross-sectional method, this function leads to a non-empty prediction region for all realized values of $X_{n+1}$ and as mentioned above, this can be a disadvantage. There exist many variations of {\clr these regions including those based on conditional variance normalization~\citep{lei2014distribution,lei2018distribution}, conditional quantile estimator~\citep{romano2019conformalized,kivaranovic2019adaptive,sesia2019comparison}, and conditional distributions~\citep{chernozhukov2019distributional,izbicki2019flexible,kuchibhotla2019nested}. These variations aim to get approximate conditional coverage and have width adapting to the conditional heteroscedasticity. See Corollary 1 of~\cite{sesia2019comparison} for details.}
\item \emph{Conditional Probability Density Estimation:} Although conditional validity~\eqref{eq:conditional-conformal-regression} is non-trivially impossible in a distribution-free setting, one can still construct a marginally valid prediction region that can asymptotically attain the conditional validity guarantee. Similar to the optimal prediction region~\eqref{eq:optimal-prediction-region} for the whole vector $Z_{n+1} = (X_{n+1}, Y_{n+1})$, the optimal conditional prediction region for $Y_{n+1}$ given $X_{n+1}$ is given by
  \begin{equation}\label{eq:optimal-conditional-conformal}
  {\mathcal{R}}_{\alpha}^{\mathrm{opt}}(x) ~:=~ \left\{y\in\mathcal{Y}:\,p(y|x) \ge t_{\alpha}(x)\right\},
  \end{equation}
  where $p(y|x)$ is the conditional probability density function of $Y$ given $X = x$ and $t_{\alpha}(x)$ is the largest $t\ge0$ such that
  \[
  \int_{y:p(y|x) \ge t} p(y|x)dy \ge 1 - \alpha.
  \]
  An imitation of this region is given by replacing $p(y|x)$ by an estimator $\widehat{p}_{\mathcal{T}}(y|x)$ based on the training data $(X_i, Y_i), 1\le i\le n_1$. This replacement does not guarantee any validity and any such guarantees depend on the accuracy of $\widehat{p}_{\mathcal{T}}(y|x)$ for $p(y|x)$. To guarantee validity, define for $z = (x, y)$,
  \begin{equation}\label{eq:optimal-score}
  \widehat{f}_{\mathcal{T}}(z) := 1 - \inf\{\alpha\in[0,1]:\,y\in\widehat{\mathcal{R}}^*_{\mathcal{T},\alpha}(x)\},
  \end{equation}
  where
  \[
  \widehat{\mathcal{R}}^*_{\mathcal{T},\alpha}(x) ~:=~ \left\{y\in\mathcal{Y}:\,\widehat{p}_{\mathcal{T}}(y|x) \ge \widehat{t}_{\mathcal{T},\alpha}(x)\right\}.
  \]
  Here $\widehat{t}_{\mathcal{T},\alpha}(x)$ is the largest $t>0$ such that $\int_{y:\widehat{p}_{\mathcal{T}}(y|x) \ge t}\widehat{p}_{\mathcal{T}}(y|x)dy \ge 1 - \alpha$. Now Theorem~\ref{thm:main-split-conformal} with $\widehat{f}_{\mathcal{T}}(\cdot)$ in~\eqref{eq:optimal-score} leads to a prediction region that is guaranteed to satisfy~\eqref{eq:marginal-conformal-regression}. We stress once again that although this region is imitating the optimal conditional prediction region~\eqref{eq:optimal-conditional-conformal}, it does \emph{not} have a finite sample conditional guarantee. Because of the imitation, it is expected that the region from Theorem~\ref{thm:main-split-conformal} based on~\eqref{eq:optimal-score} will asymptotically satisfy the conditional guarantee~\eqref{eq:conditional-conformal-regression}. {\clr Similar construction of conformal prediction sets for regression and classification has also been discussed in~\cite{izbicki2019flexible} and~\cite{kuchibhotla2019nested}.}
  Unlike the conditional mean estimation based prediction region, the region from~\eqref{eq:optimal-score} is sensible for classification and regression alike. {\clr For classification, the prediction set using $\widehat{f}_{\mathcal{T}}(\cdot)$ in~\eqref{eq:optimal-score} gathers those classes with higher estimated probabilities from the classifier.} See~\cite{romano2020classification} and~\cite{kuchibhotla2021nested} for a similar approach for classification. 
\end{itemize}
%%%%%%%%%%%%%%%%%%%%%%%%%%%%%%%%%%%%%%%%%%%%%%%%%%%%%%%%%%%%%
\subsection{Testing Interpretation of Conformal Prediction and other Variants}\label{sec:testing-interpre}
{\clr The full conformal method, as mentioned before, is computationally prohibitive, in general. But this method makes the full use of the data for prediction purposes. The split conformal method although computationally efficient uses some part of the data for training and only some part of the data for prediction calibration purposes. For this reason, many authors~\citep{balasubramanian2014conformal,lei2018distribution,barber2019predictive} have argued that split conformal method could incur statistical inefficiency due to this splitting. Several methods have been proposed to make better use of the data; see, for example,~\cite{carlsson2014aggregated,vovk2015cross,linusson2017calibration,vovkcombining,lei2018distribution,barber2019predictive,kuchibhotla2019nested,kim2020predictive}. In this section, we describe these variants briefly using the hypothesis testing interpretation of conformal prediction regions.}

{\clr Recall that the goal of conformal prediction is to construct a set $\widehat{\mathcal{R}}_{n,\alpha}$ based on $W_1, \ldots, W_n$ such that
\[
\mathbb{P}\left(W_{n+1}\in\widehat{\mathcal{R}}_{n,\alpha}\right) \ge 1 - \alpha.
\]
Following the duality of testing and confidence regions, we can formally think of testing the hypothesis $H_0:\,W_{n+1} = w$ for some value $w$. (This is not a traditional hypothesis because it relates to a random variable $W_{n+1}$.) Based on a test for $H_0$, a valid prediction set can be constructed by collecting all the $w$'s for which $H_0: W_{n+1} = w$ is not rejected. We will now define a $p$-value and the corresponding test when $W_1, \ldots, W_n, W_{n+1}$ are real-valued exchangeable random variables. The case of arbitrary spaces can be dealt with similarly using transformations. Define the $p$-value
\begin{equation}\label{eq:p-value-conformal}
P_{w} ~:=~ 1 - \frac{\mathrm{rank}(w; \{W_1, W_2, \ldots, W_n, w\}) + 1}{n+1}.
\end{equation}
From Corollary~\ref{cor:for-conformal-prediction}, it follows that $\mathbb{P}(P_{W_{n+1}} \le \alpha) \le \alpha$ for all $\alpha\in[0,1]$. In other words, $P_{w}$ is a valid $p$-value under $H_0: W_{n+1} = w$. Hence, the region $\{w:\,P_w > \alpha\}$ is a valid $(1-\alpha)$ prediction region. This $p$-value interpretation of conformal prediction method was mentioned in~\citet[Section 4.2]{shafer2008tutorial} and~\citet[Section 2]{MR3174619}, among others. For an interesting modification of these $p$-values in relation to conformal prediction, see~\cite{carlsson2015modifications}.

We are now ready to discuss the conformal prediction methods that lie in between the split and full conformal methods.
\begin{itemize}[leftmargin=*]
\item Jackknife and CV methods~\citep{vovk2015cross,barber2019predictive} are based on the idea of splitting the data into multiple disjoint folds (instead of just 2) and then combine the ranks or transformed variables in some way. To elaborate, we briefly describe the jackknife+ method from~\cite{barber2019predictive}. Suppose $Z_1, \ldots, Z_n, Z_{n+1}\in\mathcal{Z}$ are exchangeable random variables with the goal of predicting $Z_{n+1}$. Let $\widehat{f}_{-(i,j)}:\mathcal{Z}\to\mathbb{R}$ be a permutation invariant transformation computed based on $\{Z_1, \ldots, Z_{n+1}\}\setminus\{Z_i, Z_j\}$. In regression data with $Z_i = (X_i, Y_i)$, for example, $\widehat{f}_{-(i,j)}(z)$ can be $|y - \widehat{\mu}_{-(i,j)}(x)|$, where $\widehat{\mu}_{-(i,j)}(\cdot)$ is a regression function computed based on $\{Z_1, \ldots, Z_{n+1}\}\setminus\{Z_i, Z_j\}$. Define the prediction set for $Z_{n+1}$ as
\[
\widehat{\mathcal{R}}_{n,\alpha}^{\texttt{Jack}} ~:=~ \left\{z\in\mathcal{Z}:\,\sum_{j=1}^n \mathbbm{1}\left\{\widehat{f}_{-(n+1,i)}(z) > \widehat{f}_{-(i,n+1)}(Z_i)\right\} < (1-\alpha)(n+1)\right\}.
\]
Note that $\widehat{f}_{-(i,n+1)}(\cdot)\equiv\widehat{f}_{-(n+1,i)}(\cdot)$ and that these transformations can be computed without the knowledge of $Z_{n+1}$. Further, $\widehat{f}_{-(i,n+1)}(\cdot)$ is a leave-one-out transformation on the data $Z_1, \ldots, Z_n$. Theorem~1 of~\cite{barber2019predictive} can be used to prove that $\mathbb{P}(Z_{n+1}\in\widehat{\mathcal{R}}_{n,\alpha}^{\texttt{Jack}}) \ge 1-2\alpha$, although the theorem is only stated for regression data with absolute residual. The proof of Theorem~1 of~\cite{barber2019predictive} hinges on the fact that
\begin{equation}\label{eq:transformation-jackknife+}
G\begin{pmatrix}Z_1\\\vdots\\Z_{n+1}\end{pmatrix} := \begin{pmatrix}W_1\\\vdots\\W_{n+1}\end{pmatrix} = \begin{pmatrix}\sum_{j=2}^{n+1}\mathbbm{1}\{\widehat{f}_{-(1,j)}(Z_1) > \widehat{f}_{-(j,1)}(Z_j)\}\\\vdots\\
\sum_{j=1}^{n}\mathbbm{1}\{\widehat{f}_{-(n+1,j)}(Z_{n+1}) > \widehat{f}_{-(j,n+1)}(Z_{j})\}\end{pmatrix},
\end{equation}
is an exchangeability preserving transformation; this can be verified using Theorem~\ref{thm:transformations-exchangeability}.  Step 1 in Section 6 of~\cite{barber2019predictive} shows that the number of coordinates in the right hand side of~\eqref{eq:transformation-jackknife+} that are larger than or equal to $(1-\alpha)(n+1)$ is bounded by $2\alpha(n+1)$; this is a deterministic inequality and does not require exchangeability of $Z_1, \ldots, Z_{n+1}$. Using the fact that $G(\cdot)$ is exchangeability preserving, we get that
\begin{align*}
\mathbb{P}\left(W_{n+1} \ge (1-\alpha)(n+1)\right) &= \frac{1}{n+1}\sum_{j=1}^{n+1}\mathbb{P}\left(W_j\ge(1-\alpha)(n+1)\right)\\ &= \mathbb{E}\left[\frac{\sum_{j=1}^n \mathbbm{1}\{W_j \ge (1-\alpha)(n+1)\}}{n+1}\right] \le 2\alpha.
\end{align*}
Here the inequality above follows from Step 1 in Section 6 of~\cite{barber2019predictive}. The CV+ method is defined similarly where instead of leave-one-out, one uses a leave-a-fold-out; see Section 3 of~\cite{barber2019predictive} for details. Also, see~\cite{solari2021multi} for a different argument.
\item Subsampling or repeated split methods~\citep{carlsson2014aggregated,lei2018distribution,kuchibhotla2019nested} repeat the split conformal method several times on the data and combine the resulting prediction sets in some way. To elaborate, we briefly discuss the subsampling or Bonferroni method discussed in~\cite{lei2018distribution} and~\cite{kuchibhotla2019nested}. Recall that the split conformal method can be interpreted in terms of a retention region from a $p$-value~\eqref{eq:p-value-conformal}. If we repeat the splitting process on the data $K$ times, then we get $K$ $p$-values $P_{Z_{n+1}}^{(1)}, \ldots, P_{Z_{n+1}}^{(K)}$. It is very important to observe that these are dependent $p$-values, dependent through $Z_{n+1}$. This implies that $K\min_{1\le k\le K}P_{Z_{n+1}}^{(k)}$ is also a valid $p$-value:
\[
\mathbb{P}\left(K\min_{1\le k\le K}P_{Z_{n+1}}^{(k)} \le \alpha\right) \le \mathbb{P}\left(\bigcup_{k=1}^K\left\{P_{Z_{n+1}}^{(k)} \le \frac{\alpha}{K}\right\}\right) \le \sum_{k=1}^K \frac{\alpha}{K} \le \alpha.
\] 
Hence, we get that $\{z:\,P_z^{(k)} > \alpha/K\mbox{ for all }1\le k\le K\}$ is a valid $(1-\alpha)$ prediction region for $Z_{n+1}$. The combination of $p$-values above is the Bonferroni correction from the multiple testing literature. One can use other combinations of $p$-values such as twice the arithmetic or geometric mean, and so on; see~\cite{vovkcombining} for more examples. The use of multiple testing for other conformal methods can be seen in~\citet[Section 2.3]{lei2018distribution},~\citet[Section 4]{vovkcombining},~\citet[Appendix B \& C]{kuchibhotla2019nested}.
\end{itemize} 
% It should be mentioned here that the validity proof of Jackknife+ and CV+ methods in~\cite{barber2019predictive} is different from the multiple testing combination and is beyond the scope of this expository paper. 
Although these variants make better use of the full data, there is a clear lack of great advantage of these complicated methods in performance per computational cost in comparison to the split conformal method. Firstly, these variants require computing the transformation multiple times. Secondly, they can be both conservative and anti-conservative in practice. Finally, the volume of the resulting prediction regions can also be larger than that of the split method. See~\citet[Figure 2]{barber2019predictive} and~\citet[Table 2]{kuchibhotla2019nested} for some comparisons. This is not to say that split conformal is always the best. Conformal methods that make use of (close to) full data perform best when the dimension of the training algorithm is close to the sample size; for example, fitting linear regression with $n$ observations and $d ~(\approx n)$ covariates~\citep[Figure 2]{barber2019predictive}. 
% Further, these methods can also perform better when the training algorithm readily yields several folds of the data. 
% Split conformal method gets a guaranteed coverage of $1 - \alpha$ and are also guaranteed to be not overly conservative; see Theorem~\ref{thm:main-split-conformal}. But the jackknife+, CV+, and subsampling based methods in general only obtain a guaranteed coverage of $1 - 2\alpha$ and they can also be overly conservative, in practice. Even in terms of width, they are not significantly shorter.  
With machine learning algorithms such as random forests that automatically yield several training and calibration sets, the out-of-bag or aggregated conformal methods~\citep{kim2020predictive,kuchibhotla2019nested} can yield better performance in comparison to split conformal without increasing the computational cost. 
% But with other machine learning algorithms, the performance per computational cost could be worse for these complicated methods.
}
\section{Nonparmetric Rank Tests}\label{SEC:RANK-TESTS}
Testing equality of distributions and independence of random vectors are two of the most fundamental problems in statistics. In the following two sections, we discuss each of these problems and detail the implications of exchangeability. We have seen in the previous section that a basic prediction interval for real-valued random variables can be used to construct prediction sets for random variables in arbitrary spaces by a data-driven transformation. In this section, we show that the classical non-parametric rank tests, defined for real-valued random variables, can also be used for tests for random variables in arbitrary spaces by a data-driven transformation. All this is made possible by exchangeability and its consequences. {\clr It should be mentioned that this dimension reduction idea in rank tests is not new and has been discussed in some works such as~\cite{matthews1996nonparametric} and~\cite{friedman2003multivariate}.}

A brief description of the tests is as follows:
\begin{enumerate}
  \item \textbf{Equality of Distributions:} Given two datasets, one might want to test if the datasets are obtained from the same distribution. More formally, given $n$ i.i.d. observations {\clr $X_1, \ldots, X_n$} from $P_X$ and $m$ i.i.d. observations {\clr $Y_1, \ldots, Y_m$} from $P_Y$ with both sets of observations independent, we want to test
  \begin{equation}\label{eq:null-equality-of-dist}
  H_0:\,P_X = P_Y\quad\mbox{versus}\quad H_1: P_X\neq P_Y.
  \end{equation}
  This is also known as a two-sample testing problem and has numerous applications in pharmaceutical studies~\citep{farris1999between}, causal inference~\citep{folkes1987field}, remote sensing~\citep{conradsen2003test}, and econometrics~\citep{mayer1975selecting}.
  \item \textbf{Independence:} Given $n$ i.i.d. paired observations $(X_i, Y_i), 1\le i\le n$, one might want to test if $X_i$ is independent of $Y_i$. Formally, {\clr suppose} the distribution of $(X_i, Y_i)$ is $P$, the {\clr marginal} distribution of $X_i$ is $P_X$, and the {\clr marginal} distribution of $Y_i$ is $P_Y$. Then we want to test
  \[
  H_0:\,P = P_X\otimes P_Y\quad\mbox{versus}\quad H_1: P \neq P_X\otimes P_Y.
  \]
  Independence testing has found applications in statistical genetics~\citep{liu2010versatile}, marketing and finance~\citep{grover1985probabilistic}, survival analysis~\citep{martin2005testing}, and ecological risk
assessment~\citep{dishion1999middle}. Furthermore, the test for equality of distributions can be formulated as a test for independence {\clr by defining $Y$ as a binary random variable labeling the sample to which $X$ belongs to; see, e.g.,~\citet[Section 1]{heller2016consistent}.}
\end{enumerate}
{\clr Unlike in the case of conformal prediction, in this section, we will assume that the underlying observations are independent and identically distributed. The reason for this change in assumption can be understood as follows. Consider the testing problem~\eqref{eq:null-equality-of-dist} of equality of distributions. Under the null hypothesis $H_0$, we want the data obtained by combining $\{X_1, \ldots, X_n\}$ and $\{Y_1, \ldots, Y_m\}$ to form an exchangeable sequence. For this exchangeability, under the independence of $\{X_1, \ldots, X_n\}$ and $\{Y_1, \ldots, Y_m\}$, the i.i.d. assumption seems the most sensible.}
%%%%%%%%%%%%%%%%%%%%%%%%%%%%%%%%%%%%%%%%%%%%%%%%%%%%%%%%%%%%%
%%%%%%%%%%%%%%%%%%%%%%%%%%%%%%%%%%%%%%%%%%%%%%%%%%%%%%%%%%%%%
\subsection{Testing Equality of Distributions}\label{sec:equality-of-distributions}
Suppose $X_1, \ldots, X_n$ are independent and identically distributed random variables from a probability measure $P_X$ and $Y_1, \ldots, Y_m$ are independent and identically distributed random variables from a probability measure $P_Y$. Random variables $X_1, \ldots, X_n$ are independent of $Y_1, \ldots, Y_m$. The hypothesis to test is
\begin{equation}\label{eq:equality-of-distributions}
H_0:\,P_X = P_Y\quad\mbox{versus}\quad H_1:\,P_X\neq P_Y.
\end{equation}
There exist numerous tests for this hypothesis. We refer to reader to~\cite{bhattacharya2015two,bhattacharya2019general} and~\cite{deb2019multivariate} for an overview of the existing literature. Wilcoxon rank sum test~\citep[Section 4.1]{hollander2013nonparametric} is one of the classical rank tests for this hypothesis when the observations are real-valued. Let $\mathcal{C}_{n+m} := \{Z_1, \ldots, Z_{n+m}\}$ represent the random variables $X_1, \ldots, X_n$ and $Y_1, \ldots, Y_m$ put together. Under $H_0$, the collection of random variables in $\mathcal{C}_{n+m}$ are independent and identically distributed and hence, by Theorem~\ref{thm:rank-distribution}, $\mathrm{rank}(Z_1; \mathcal{C}_{n+m}), \ldots, \mathrm{rank}(Z_{n+m};\mathcal{C}_{n+m})$ are distributed uniformly over all permutations of $[n+m]$; recall $[n+m]=\{1,2,\ldots,n+m\}$. The Wilcoxon rank sum test statistic is given by
\[
T_n ~:=~ \sum_{i=1}^n \mathrm{rank}(Z_i;\,\mathcal{C}_{n+m}),
\]  
the sum of ranks of $X_1, \ldots, X_n$ among $\mathcal{C}_{n+m}$. Recall the definition of rank from Definition~\ref{def:rank-definition}. From this definition, it follows that the distribution of $T_n$ does not depend on the true distributions $P_X$ and $P_Y$ under $H_0$. %It is noteworthy that this distribution-free nature of $T_n$ only requires the random variables in the collection $\mathcal{C}_{n+m}$ to be exchangeable. 

Using this, we now extend the Wilcoxon test to random variables taking values in arbitrary space $\mathcal{Z}$. Let $\widehat{f}_{n+m}:\mathcal{Z}\to\mathbb{R}$ be any transformation that depends \emph{permutation invariantly} on $Z_1, \ldots, Z_{n+m}\in\mathcal{C}_{n+m}$. This means that if we write
\[
\widehat{f}_{n+m}(z) ~=~ \widehat{f}_{n+m}(z;\,Z_1, \ldots, Z_{n+m}),
\] 
then permutation invariance means for any $z\in\mathcal{Z}$ and any $\pi:[n+m]\to[n+m]$, 
\[
\widehat{f}_{n+m}(z;\, Z_1, \ldots, Z_{n+m}) ~=~ \widehat{f}_{n+m}(z;\,Z_{\pi(1)}, \ldots, Z_{\pi(n+m)}).
\]
Proposition~\ref{prop:permutation-invariant-exchangeable} (in Appendix~\ref{appsec:auxiliary-results}) shows that $W_1 := \widehat{f}_{n+m}(Z_1), \ldots, W_{n+m} := \widehat{f}_{n+m}(Z_{n+m})$ are exchangeable and hence
\[
T_n^{\mathrm{exch}} := \sum_{i=1}^n \mathrm{rank}(W_i;\, \{W_1, \ldots, W_{n+m}\}),
\]
the sum of ranks of $W_1, \ldots, W_n$ among the collection $\{W_1, \ldots, W_{n+m}\}$, is also distribution-free and has the same distribution as $T_n$ under $H_0$. This proves the following result.
\begin{thm}\label{thm:wilcoxon-exchangeable}
Suppose $X_1, \ldots, X_n$ are independent and identically distributed random variables from a probability measure $P_X$ supported on $\mathcal{Z}$ and $Y_1, \ldots, Y_m$ are independent and identically distributed random variables from a probability measure $P_Y$ supported also on $\mathcal{Z}$. Then for any transformation $\widehat{f}_{n+m}$ depending permutation invariantly on $X_1, \ldots, X_n, Y_1, \ldots, Y_m$,
\begin{equation}\label{eq:T-n-exchangeable}
T_n^{\mathrm{exch}} ~:=~ \sum_{i=1}^n \mathrm{rank}(\widehat{f}_{n+m}(X_i);\,\{\widehat{f}_{n+m}(X_1), \ldots, \widehat{f}_{n+m}(Y_m)\}),
\end{equation}
is distribution-free under $H_0$ and matches the null distribution of the Wilcoxon rank sum statistic. 
\end{thm}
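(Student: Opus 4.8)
The plan is to reduce everything to the uniform distribution of ranks of exchangeable real-valued variables (Theorem~\ref{thm:rank-distribution}); the only genuine work is verifying that a permutation-invariant, data-dependent transformation preserves exchangeability. Relabel the pooled sample as $Z_1 := X_1, \ldots, Z_n := X_n$ and $Z_{n+1} := Y_1, \ldots, Z_{n+m} := Y_m$, and write $W_i := \widehat{f}_{n+m}(Z_i)$, so that $T_n^{\mathrm{exch}} = \sum_{i=1}^n \mathrm{rank}(W_i;\, \{W_1, \ldots, W_{n+m}\})$. First I would observe that under $H_0$, where $P_X = P_Y =: P$, the combined sequence $Z_1, \ldots, Z_{n+m}$ is i.i.d.\ from $P$ (the two independent i.i.d.\ blocks share the same law), and is therefore exchangeable.

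Next I would push exchangeability through $\widehat{f}_{n+m}$. The delicate point is that $\widehat{f}_{n+m}$ depends on the very data it transforms, so properly $W_i = \widehat{f}_{n+m}(Z_i;\, Z_1, \ldots, Z_{n+m})$. Permutation invariance is precisely the hypothesis needed: for the map $G:(z_1, \ldots, z_{n+m}) \mapsto (\widehat{f}_{n+m}(z_i;\, z_1, \ldots, z_{n+m}))_{i=1}^{n+m}$, applying any output permutation $\pi_1$ coincides with applying $G$ after the same input permutation $\pi_2 = \pi_1$, since permutation invariance restores the argument list to its original order. Thus $G$ satisfies condition~\eqref{eq:permutation-condition} of Theorem~\ref{thm:transformations-exchangeability}, so $W_1, \ldots, W_{n+m}$ are exchangeable; this is exactly the content of Proposition~\ref{prop:permutation-invariant-exchangeable}.

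I would then apply Theorem~\ref{thm:rank-distribution} to the exchangeable real-valued variables $W_1, \ldots, W_{n+m}$: their rank vector is uniform over all permutations of $[n+m]$, a distribution that does not involve $P$. Since $T_n^{\mathrm{exch}}$ is the sum of the first $n$ coordinates of this rank vector --- a fixed measurable functional of it --- its law is free of $P$, which is the distribution-free conclusion. For the final claim, note that the classical Wilcoxon statistic $T_n$ is, under $H_0$, the sum of the first $n$ ranks among $n+m$ i.i.d.\ (hence exchangeable) real-valued observations, i.e.\ the sum of the same $n$ coordinates of a uniform random permutation of $[n+m]$; identical functionals of identically distributed rank vectors give $T_n^{\mathrm{exch}} \overset{d}{=} T_n$.

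The main obstacle is conceptual rather than computational: confirming that a data-dependent transformation does not destroy exchangeability. Once permutation invariance is matched to condition~\eqref{eq:permutation-condition}, every remaining step is a direct appeal to Theorem~\ref{thm:rank-distribution}, and no distributional assumption beyond $H_0$ enters.
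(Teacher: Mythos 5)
Your proposal is correct and follows the paper's own route exactly: under $H_0$ the pooled sample is i.i.d.\ hence exchangeable, permutation invariance of $\widehat{f}_{n+m}$ preserves exchangeability via Theorem~\ref{thm:transformations-exchangeability} (your explicit verification with $\pi_2=\pi_1$ is precisely the paper's proof of Proposition~\ref{prop:permutation-invariant-exchangeable}), and Theorem~\ref{thm:rank-distribution} then gives the uniform rank vector, so $T_n^{\mathrm{exch}}$ is distribution-free and equal in law to the classical Wilcoxon statistic. No gaps; nothing to add.
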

The main conclusion of the theorem is that any permutation invariant transformation $\widehat{f}_{n+m}$ obtained from the data will lead to a distribution-free test with a finite sample control of the type I error, irrespective of the domain of the data. This is the analogy with {\clr  the full} conformal prediction {\clr method} where for any {\clr permutation invariant} transformation~\eqref{eq:permutation-invariance-full-conformal}, the prediction region has a finite sample control of the coverage. %Interestingly, the Type I error as well as the coverage do not differ from the nominal value by much (at most of inverse sample size order). 
% \vspace{0.1in}
{\clr Theorem~\ref{thm:wilcoxon-exchangeable} is obvious if $\widehat{f}_{n+m}$ does not depend on the data $Z_1, \ldots, Z_{n+m}$. Such application of rank tests after dimension reduction is well-known. The main novelty of Theorem~\ref{thm:wilcoxon-exchangeable} is that the transformation $\widehat{f}_{n+m}$ can depend on the data. Although relatively easy to derive based on Theorem~\ref{thm:transformations-exchangeability}, Theorem~\ref{thm:wilcoxon-exchangeable} is new to the best of our knowledge.}
\begin{tcolorbox}
\noindent\textbf{Pseudocode 3:} Computationally, Theorem~\ref{thm:wilcoxon-exchangeable} works as follows:
\begin{enumerate}
    \item Rename the data as $$Z_1 = X_1, \ldots, Z_n = X_n, Z_{n+1} = Y_1, \ldots, Z_{n+m} = Y_m.$$
    \item Find a permutation invariant transformation $\widehat{f}_{n+m}$ based on $Z_1, \ldots, Z_{n+m}$. This can be any cluster revealing transformation; see Section~\ref{subsec:examples-equality-of-dist} below.
    \item Take $\xi = 10^{-8}$ and compute the ranks of the first $n$ elements $\widehat{f}_{n+m}(Z_1), \ldots, \widehat{f}_{n+m}(Z_n)$ among $\widehat{f}_{n+m}(Z_{1}), \ldots, \widehat{f}_{n+m}(Z_{n+m})$.
    \item Sum the ranks and compute $T_n^{\mathrm{exch}}$. Reject the null hypothesis $H_0$ if $T_n^{\mathrm{exch}}$ deviates from its expected value~$n(n+m+1)/2$. The critical value here is the same as that of Wilcoxon rank sum test; see~\citet[Section 4.1]{hollander2013nonparametric} and~\cite{wilcoxon1945individual} for details.
\end{enumerate}
\end{tcolorbox}
%%%%%%%%%%%%%%%%%%%%%%%%%%%%%%%%%%%%%%%%%%%%%%%%%%%%%%%%%%%%%
%%%%%%%%%%%%%%%%%%%%%%%%%%%%%%%%%%%%%%%%%%%%%%%%%%%%%%%%%%%%%
\subsubsection{Some Concrete Examples.}\label{subsec:examples-equality-of-dist} In the following, we discuss a few examples of $\widehat{f}_{n+m}$ in Theorem~\ref{thm:wilcoxon-exchangeable}. The most generic application is based on applying any of the many unsupervised clustering algorithms.
\begin{enumerate}[leftmargin=*]
    \item \textbf{$k$-means Clustering:} Suppose $Z_1, \ldots, Z_{n+m}$ are elements of a space $\mathcal{Z}$ with a quasi-norm $\|\cdot\|$.\footnote{Quasi-norm means that $\|\cdot\|$ does not need to satisfy the triangle inequality.} Fix $k\ge1$, the number of clusters and apply the $k$-means clustering algorithm on the data $Z_1, \ldots, Z_{n+m}$, that is, find $\widehat{c}_1, \ldots, \widehat{c}_k\in\mathcal{Z}$ such that
    \begin{equation}\label{eq:k-means-proper}
    (\widehat{c}_1, \ldots, \widehat{c}_k) ~:=~ \argmin_{c_1, \ldots, c_k\in\mathcal{Z}}\,\sum_{i=1}^{n+m}\,\min_{1\le j\le k}\|Z_i - c_j\|^2.
    \end{equation}
    {\clr Note that the minimizer can at best be unique up to permutations, i.e., if $(\widehat{c}_1, \ldots, \widehat{c}_k)$ is a minimizer, then $(\widehat{c}_{\pi(1)}, \ldots, \widehat{c}_{\pi(k)})$ is also a minimizer for arbitrary permutation $\pi:[k]\to[k]$. For our purposes, we choose (arbitrarily) and fix a minimizer.}
    It is easy to verify that $(\widehat{c}_1, \ldots, \widehat{c}_k)$ is a permutation invariant function of $\{Z_1, \ldots, Z_{n+m}\}$. 
    One transformation $\widehat{f}_{n+m}$ based on this clustering method is given by
    \[
    \widehat{f}_{n+m}(Z_i) ~:=~ \|Z_i - \widehat{c}_1\| - \min_{1\le j\le k}\|Z_i - \widehat{c}_j\|. 
    \] 
    To gain intuition for this particular permutation invariant transformation, suppose $H_1$ is true (that is, $P_X\neq P_Y$) and $P_X = N(\mu_X, \sigma^2)$ and $P_Y = N(\mu_Y, \sigma^2)$. If we perform $k$-means clustering with $k=2$, then asymptotically $\widehat{c}_1 = \mu_X$ and $\widehat{c}_2 = \mu_Y$ (up to labels) and hence, asymptotically,
    \begin{align*}
     &\min_{1\le j\le k}\|Z_i - \widehat{c}_j\| = \|Z_i - \widehat{c}_1\|,\quad\Rightarrow\quad \widehat{f}_{n+m}(Z_i) = 0,\quad\mbox{for}\quad 1\le i\le m,\\
     &\min_{1\le j\le k}\|Z_i - \widehat{c}_j\| = \|Z_i - \widehat{c}_2\|,\quad\mbox{for}\quad m + 1 \le i\le m+n,\\
     &\quad\Rightarrow\; \widehat{f}_{n+m}(Z_i) = \|Z_i - \widehat{c}_1\| - \|Z_i - \widehat{c}_2\| > 0,\quad\mbox{for}\quad i > m.
    \end{align*}
    This implies that under $H_1$ (asymptotically) $T_n^{\mathrm{exch}} = \sum_{i=1}^n i = n(n+1)/2$ which is significantly smaller than the mean leading to a rejection.

    For the purposes of Theorem~\ref{thm:wilcoxon-exchangeable}, it does not matter if one is able to obtain the global minimum in~\eqref{eq:k-means-proper}. Any method for obtaining $\widehat{c}_1, \ldots, \widehat{c}_k$ is permissible, as long as the procedure does not depend on the $X, Y$ labels. For instance, Lloyd's $k$-means clustering algorithm, $k$-means++ algorithm~\citep{arthur2007k} and many other variants~\citep{celebi2013comparative,hamerly2010making,ding2015yinyang,hamerly2015accelerating,shen2017compressed,newling2016fast} will work. In addition to the variants of the algorithms, one can also use the data to choose $k$, the number of clusters based on the data. In particular, the well-known elbow rule can be used for this purpose. 
    \item \textbf{Dimension Reduction and $k$-means:} In the same setting as above, due to the high computational complexity~\citep{gronlund2017fast,aloise2009np} of the $k$-means clustering, one might opt to perform a preliminary dimension reduction and then use $k$-means clustering. {\clr Another reason for dimension reduction could be a possibility that distributions differ along a low-dimensional projection.} Many of the dimension reduction techniques discussed in Section~\ref{subsec:examples-conformal} can be used. The final transformation $\widehat{f}_{n+m}$ is given by
    \[
    \widehat{f}_{n+m}(Z_i) ~:=~ \|\Pi(Z_i) - \widehat{c}_1\| ~-~ \min_{1\le j\le k}\|\Pi(Z_i) - \widehat{c}_{j}\|,
    \]   
    where $\Pi(\cdot)$ denotes the preliminary dimension reduction map with $\Pi(Z_1)$, $\ldots$, $\Pi(Z_{n+m})$ representing the dimension reduced data and $\widehat{c}_1, \ldots, \widehat{c}_k$ now represent the $k$-cluster centers obtained from the dimension reduced data.
    \item \textbf{Clustering and Density Estimation:} In this case, we provide a generic method of converting any unsupervised clustering method into a valid transformation for Theorem~\ref{thm:wilcoxon-exchangeable}. Suppose we have an unsupervised clustering procedure $\mathcal{P}$ that partitions the collection of random variables $Z_1, \ldots, Z_{n+m}$ into disjoint sets $B_1, \ldots, B_k$ (where $k$ could itself be a part of $\mathcal{P}$). {\clr This clustering procedure should treat the data $Z_1, \ldots, Z_{n+m}$ in a permutation invariant way.} See~\cite{wasserman-notes} for some examples. From all the random variables in $B_j, 1\le j\le k$, estimate the density in a permutation invariant way; see~\cite{wang2019nonparametric} and~\cite{kim2018uniform} for some examples. 
    % For instance, the kernel density estimator can be used:
    % \[
    % \widehat{p}_{B_j}(z) ~:=~ \frac{1}{|B_j|h_n^r}\sum_{z'\in B_j}K\left(\frac{\|z - z'\|}{h_n}\right),
    % \]
    % for a kernel function $K(\cdot)$, bandwidth $h_n$ and ``dimension'' $r\ge1$; see~\cite{kim2018uniform} for details. This is just one example and any density estimator can be used. 
    Then the transformation $\widehat{f}_{n+m}(\cdot)$ is given by
    \[
    \widehat{f}_{n+m}(Z_i) ~:=~ \frac{\widehat{p}_{B_1}(Z_i)}{\max_{1\le j\le k}\,\widehat{p}_{B_j}(Z_i)}.
    \]
    To gain intuition for this transformation, observe that if we obtain $\widehat{c}_1, \ldots, \widehat{c}_k$ from a nearest neighbor clustering and we take the density estimator $\widehat{p}_{B_j}(x) = \phi(x; \widehat{c}_j, I),$ the multivariate normal density with location $\widehat{c}_j$ and variance identity, then $\widehat{f}_{n+m}(Z_i)$ is a simple transformation of $\|Z_i - \widehat{c}_1\| - \min_{1\le j\le k}\|Z_i - \widehat{c}_j\|$ matching the one in first example above. 

    As in the previous example, one can first apply a dimension reduction procedure on the data and then apply the unsupervised clustering method. In this case, the density estimator could be based on the dimension reduced data.
\end{enumerate} 
% The examples discussed above mostly focused on Euclidean space valued random variables. As mentioned at the end of Section~\ref{SEC:CONFORMAL-PREDICTION}, the techniques can easily be extended to other types of random variables, namely, image or text data.
%%%%%%%%%%%%%%%%%%%%%%%%%%%%%%%%%%%%%%%%%%%%%
%%%%%%%%%%%%%%%%%%%%%%%%%%%%%%%%%%%%%%%%%%%%%
\subsection{Testing Independence}\label{sec:test-independence}
Suppose $(X_1, Y_1), \ldots, (X_n, Y_n)\in\mathcal{X}\times\mathcal{Y}$ are independent and identically distributed random variables from a probability measure $P$ on $\mathcal{X}\times\mathcal{Y}$. If the marginal distribution of $X_i$'s is $P_X$ and the marginal distribution of $Y_i$'s is $P_Y$, then the hypothesis to test is
\begin{equation}\label{eq:independence}
H_0:\, P = P_X\otimes P_Y\quad\mbox{versus}\quad H_1:\, P \neq P_X\otimes P_Y.
\end{equation}
Here $P_X\otimes P_Y$ represents the joint probability distribution with independent marginals of $P_X$ and $P_Y$. Similar to the problem of testing equality of distributions, there exist numerous tests for~\eqref{eq:independence} and we refer to~\cite{han2017distribution,deb2019multivariate,shi2020distribution} for an overview. One of the classical tests for independence hypothesis is based on the Spearman's rank correlation~\citep[Section 8.5]{hollander2013nonparametric}. This test applies when $P_X$ and $P_Y$ are supported on the real line. Suppose the observations are $(X_1, Y_1), \ldots, (X_n, Y_n)\in\mathbb{R}^2$. Let $R_1^X, \ldots, R_n^X$ denote the ranks of $X_1, \ldots, X_n$ and $R_1^{Y}, \ldots, R_n^{Y}$ denote the ranks of $Y_1, \ldots, Y_n$. Under the null hypothesis $H_0$, $(R_1^X, \ldots, R_n^X)$ and $(R_1^Y, \ldots, R_n^Y)$ are independent random vectors. Further each of these vectors is distributed as uniform on all permutations of $(1,2,\ldots, n)$ because of Theorem~\ref{thm:rank-distribution}. The Spearman's rank correlation is given by
\begin{equation}\label{eq:spearman-rank-corr}
\rho_n ~:=~ 1 - \frac{6}{n}\sum_{i=1}^n\frac{(R_i^X - R_i^Y)^2}{(n^2 - 1)}. 
\end{equation}
Because the distributions of $(R_1^X, \ldots, R_n^X)$ and $(R_1^Y, \ldots, R_n^Y)$ do not depend on $P_X$ and $P_Y$, $\rho_n$ in~\eqref{eq:spearman-rank-corr} has the same distribution (under $H_0$) irrespective of what $P_X$ and $P_Y$ are. 

Noting that the distribution-free nature of the test depends only on the fact that ranks are distribution-free, we get that we can transform the data in each coordinate almost arbitrarily. Let $\widehat{f}_{X}$ be a transformation that depends on $X_1, \ldots, X_n$ permutation invariantly and let $\widehat{f}_{Y}$ be a transformation that depends on $Y_1, \ldots, Y_n$ permutation invariantly. Then by Theorem~\ref{thm:transformations-exchangeability}, $\widehat{f}_X(X_1), \ldots, \widehat{f}_X(X_n)$ are exchangeable and $\widehat{f}_Y(Y_1), \ldots, \widehat{f}_Y(Y_n)$ are also exchangeable. Further under the null hypothesis $H_0$, the vectors $(\widehat{f}_X(X_1), \ldots, \widehat{f}_X(X_n))$ and $(\widehat{f}_Y(Y_1), \ldots, \widehat{f}_Y(Y_n))$ are independent. This leads to the following result.
\begin{thm}\label{thm:independence}
Suppose $(X_1, Y_1), \ldots, (X_n, Y_n)\in\mathcal{X}\times\mathcal{Y}$ are independent observations. Suppose $\widehat{f}_X$ and $\widehat{f}_Y$ are transformations depending permutation invariantly only on $(X_1, \ldots, X_n)$ and $(Y_1, \ldots, Y_n)$, respectively. Then the statistic
\[
\rho_n(\widehat{f}_X, \widehat{f}_Y) ~:=~ 1 - \frac{6}{n}\sum_{i=1}^n \frac{(R_i^X - R_i^Y)^2}{n^2 - 1},
\]
has the same distribution as Spearman's rank correlation~\eqref{eq:spearman-rank-corr} under $H_0$. Here 
\begin{align*}
R_i^X ~&:=~ \mathrm{rank}(\widehat{f}_X(X_i);\, \{\widehat{f}_X(X_1), \ldots, \widehat{f}_X(X_n)\}),\\ 
R_i^Y ~&:=~ \mathrm{rank}(\widehat{f}_Y(Y_i);\,\{\widehat{f}_Y(X_1), \ldots, \widehat{f}_Y(Y_n)\}).
\end{align*}
\end{thm}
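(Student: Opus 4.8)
The plan is to show that, under $H_0$, the joint law of the two rank vectors $(R_1^X, \ldots, R_n^X)$ and $(R_1^Y, \ldots, R_n^Y)$ is exactly the product of two independent copies of $\mathrm{Unif}(\{\pi:[n]\to[n]\})$, which is precisely the null law of the rank vectors underlying the classical Spearman statistic~\eqref{eq:spearman-rank-corr}. Since $\rho_n(\widehat{f}_X, \widehat{f}_Y)$ is the \emph{same} fixed function of its two input rank vectors as in~\eqref{eq:spearman-rank-corr}, matching the joint law of the inputs will immediately match the law of the output. So the whole proof reduces to verifying two facts about the transformed rank vectors: each is marginally uniform over permutations, and the two are independent.

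First I would use the null hypothesis to extract an independence structure. Under $H_0$ we have $P = P_X \otimes P_Y$ and the pairs are i.i.d., so the block $(X_1, \ldots, X_n)$ is independent of the block $(Y_1, \ldots, Y_n)$. Because $\widehat{f}_X$ depends only on the $X$-block and $\widehat{f}_Y$ only on the $Y$-block, the transformed sequences $(\widehat{f}_X(X_1), \ldots, \widehat{f}_X(X_n))$ and $(\widehat{f}_Y(Y_1), \ldots, \widehat{f}_Y(Y_n))$ are functions of independent random vectors and are therefore independent. This is the only place the hypothesis $P = P_X\otimes P_Y$ enters.

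Next I would establish exchangeability of each transformed sequence separately and then read off the marginals. Since the $X_i$ are i.i.d.\ (hence exchangeable) and $\widehat{f}_X$ depends permutation invariantly on $X_1, \ldots, X_n$, Proposition~\ref{prop:permutation-invariant-exchangeable} (equivalently Theorem~\ref{thm:transformations-exchangeability}) gives that $\widehat{f}_X(X_1), \ldots, \widehat{f}_X(X_n)$ are exchangeable; the identical argument applies to the $Y$-sequence. Applying Theorem~\ref{thm:rank-distribution} to each exchangeable sequence shows that $(R_1^X, \ldots, R_n^X)$ is distributed as $\mathrm{Unif}(\{\pi:[n]\to[n]\})$, and likewise for $(R_1^Y, \ldots, R_n^Y)$. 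Combining this with the independence from the previous step, the pair $(R^X, R^Y)$ has the product law $\mathrm{Unif}(\{\pi:[n]\to[n]\}) \otimes \mathrm{Unif}(\{\pi:[n]\to[n]\})$, exactly as in the real-valued Spearman setting under independence, which yields the claim.

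I expect the main obstacle to be the bookkeeping around the jittering variables in Definition~\ref{def:rank-definition}. The vector $R^X$ is a measurable function of the transformed $X$-sequence together with the auxiliary uniforms used to break its ties, and $R^Y$ is an analogous function for the $Y$-sequence; to conclude $R^X \perp R^Y$ one must verify that these auxiliary uniforms can be chosen independently across the two ranking computations and independently of all the data, so that the independence of the transformed blocks propagates to the rank vectors, while still leaving the marginal guaranteed by Theorem~\ref{thm:rank-distribution} intact. Once this mild consistency of the jittering is checked, the product structure follows and the proof is complete.
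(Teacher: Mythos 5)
Your proposal is correct and follows essentially the same route as the paper: exchangeability of each transformed sequence via permutation invariance (Proposition~\ref{prop:permutation-invariant-exchangeable}/Theorem~\ref{thm:transformations-exchangeability}), independence of the two transformed blocks under $H_0$, uniformity of each rank vector via Theorem~\ref{thm:rank-distribution}, and then the observation that $\rho_n(\widehat{f}_X,\widehat{f}_Y)$ is the same fixed function of the rank vectors as the classical Spearman statistic. Your extra care about choosing the jittering uniforms independently across the two ranking computations is a detail the paper leaves implicit, but it is a refinement of the same argument rather than a different approach.
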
 
The main conclusion of Theorem~\ref{thm:independence} is that any permutation invariant data-driven transformations $\widehat{f}_X$ and $\widehat{f}_Y$ will lead to a distribution-free finite sample valid test for the independence hypothesis~\eqref{eq:independence}, irrespective of the domain of the data. Theorem~\ref{thm:independence} is, however, lacking in one important way. If allowed, one might want to use transformations $\widehat{f}_X$ and $\widehat{f}_Y$ depending on $(X_i, Y_i), i\in[n]$ that leads to the maximal correlation between $\widehat{f}_X(X_i), i\in [n]$ and $\widehat{f}_Y(Y_i), i\in[n]$~\citep{renyi1959measures}. These transformations would depend on both coordinates. This, however, does not lead to validity at least through Theorem~\ref{thm:independence}. Once again, Theorem~\ref{thm:independence} can be seen as an analogy of the full conformal prediction method.
%Even with this disadvantage, Theorem~\ref{thm:independence} offers flexibility in terms of transforming the data.  
\subsection{Rank Tests based on Sample Splitting}
In the previous sections, we have restricted the nature of data-driven transformations; in case of equality of distributions, the transformations should not depend on the $X, Y$ labels and in case of independence, the transformations have to depend on $X_1, \ldots, X_n$ and $Y_1, \ldots, Y_n$ only marginally not jointly. Although the tests control the type I error, these restrictions can drastically effect the power. We can avoid these restrictions based on sample splitting, which will be described below. {\clr The following procedures based on sample splitting can be thought of as analogues to the split conformal prediction method.}

{\clr Although the full sample based transformation is lacking in the literature, the sample splitting transformation for rank tests has been described in the literature albeit seems not widely known. See~\cite{friedman2003multivariate} and~\cite{vayatis2009auc}.}

\vspace{0.1in}
\noindent\textbf{Equality of Distributions:} Recall that the hypotheses are $H_0: P_X = P_Y$ and $H_1: P_X\neq P_Y$. The observations are $X_1, \ldots, X_n$ i.i.d from $P_X$ and $Y_1, \ldots, Y_m$ i.i.d. from $P_Y$. Under the null hypothesis $H_0$, $Z_1, \ldots, Z_{n+m}$ defined by {\clr $Z_i = X_i$ for $1\le i\le n$ and $Z_i = Y_{i-n}$ for $n+1 \le i\le n+m$} are independent and identically distributed. Randomly split this collection into two parts:
\[
S_1 ~:=~ \{Z_i:\,i\in\mathcal{I}_1\}\quad\mbox{and}\quad S_2 ~:=~ \{Z_i:\,i\in\mathcal{I}_2\},
\]
where $\mathcal{I}_1$ contains a random subset of $[n+m]$ and $\mathcal{I}_2 = [n+m]\setminus\mathcal{I}_1$. For any transformation $\widehat{f}_{S_1}$ based on $S_1$, Proposition~\ref{prop:split-exchangeable} proves that $\widehat{f}_{S_1}(Z_i), i\in\mathcal{I}_2 = \mathcal{I}_1^c$ are exchangeable. Hence, 
\[
T^{\mathrm{split}} ~:=~ \sum_{i = 1, i\in\mathcal{I}_2}^n \mathrm{rank}(\widehat{f}_{S_1}(Z_i);\;\{\widehat{f}_{S_1}(Z_j):\,j\in \mathcal{I}_2\}),
\]
the sum of ranks of the $X_i$ random variables in the second split $S_2$ is a valid test statistic. It follows from the discussion in Section~\ref{sec:equality-of-distributions} that under $H_0$, $T^{\mathrm{split}}$ has a distribution independent of $P_X = P_Y$ and the critical values can be obtained from the Wilcoxon rank sum test based on the total sample size of $|\mathcal{I}_2|$.

In comparison to the test statistic $T_n^{\mathrm{exch}}$ in Section~\ref{sec:equality-of-distributions}, the transformation $\widehat{f}_{S_1}(\cdot)$ can now depend arbitrarily on the $X, Y$ labels. In particular, we describe a specific example below. Based on the first split $S_1$ of the data, obtain $\widehat{p}_{X,S_1}(\cdot)$ and $\widehat{p}_{Y, S_1}(\cdot)$, the estimators of density of $P_X$ and $P_Y$; the density estimator can be arbitrary~\citep{kim2018uniform,wang2019nonparametric}. The final transformation $\widehat{f}_{S_1}(\cdot)$ is given by
\[
\widehat{f}_{S_1}(Z_i) ~:=~ \frac{\widehat{p}_{X,S_1}(Z_i)}{\widehat{p}_{Y, S_1}(Z_i)},\quad\mbox{for all}\quad i\in \mathcal{I}_2.
\]
Further examples can be derived by writing the data in $S_1$ as $(Z_i, W_i), i\in\mathcal{I}_1$, where $W_i = 0$ if $Z_i\sim P_X$ and $W_i = 1$ if $Z_i\sim P_Y$ and finding an estimator $\widehat{\eta}_{S_1}(\cdot)$ of the conditional probability $\eta(z) := \mathbb{P}(W = 1|Z = z)$ based only on $S_1$. The final transformation then would be $\widehat{f}_{S_1}(Z_i) = \widehat{\eta}_{S_1}(Z_i)$, which would naturally be higher for cases where $W_i = 1$ for $i\in\mathcal{I}_2$ (under $H_1$). This can also be combined with the methods of central subspace estimation~\citep{ma2013efficient}, which is fruitful when $\eta(\cdot)$ depends on a few coordinates or directions. For instance, if $P_X$ and $P_Y$ only differ in the distribution of the first coordinate, then one can at first apply a subspace estimation algorithm for $W$ on $Z$ with the data in $S_1$ and then find a classifier based on the reduced subspace.

\vspace{0.1in}
\noindent\textbf{Independence:} Recall that the hypotheses are $H_0: P = P_{X}\otimes P_Y$ and $H_1: P\neq P_X\otimes P_Y$. The observations are $(X_1, Y_1), \ldots, (X_n, Y_n)$ which are i.i.d. from $P$. As before, randomly split the data into two parts:
\[
S_1 := \{(X_i, Y_i):\,i\in\mathcal{I}_1\}\quad\mbox{and}\quad S_2 := \{(X_i, Y_i):\,i\in\mathcal{I}_2\},
\] 
where $\mathcal{I}_1$ contains a random subset of $\{1,2,\ldots,n\}$ and $\mathcal{I}_2 = \{1,2,\ldots,n\}\setminus\mathcal{I}_1$. For any transformations $\widehat{f}_{X, S_1}(\cdot)$ and $\widehat{f}_{Y, S_1}(\cdot)$ based on $S_1$, Proposition~\ref{prop:split-exchangeable} yields that the bivariate random vectors $(\widehat{f}_{X,S_1}(X_i), \widehat{f}_{Y, S_1}(Y_i)), i\in\mathcal{I}_2$ are exchangeable and hence
\[
\rho^{\mathrm{split}}(\widehat{f}_{X, S_1}, \widehat{f}_{Y, S_1}) ~:=~ 1 - \frac{6}{|\mathcal{I}_2|}\sum_{i\in\mathcal{I}_2}\frac{(R_i^X - R_i^Y)^2}{|\mathcal{I}_2|^2 - 1},
\]
is a valid test statistic, where
\begin{align*}
R_i^X ~&:=~ \mathrm{rank}(\widehat{f}_{X, S_1}(X_i);\,\{\widehat{f}_{X,S_1}(X_j):\,j\in\mathcal{I}_2\}),\\
R_i^Y ~&:=~ \mathrm{rank}(\widehat{f}_{Y, S_1}(Y_i);\,\{\widehat{f}_{Y, S_1}(Y_j):\,j\in\mathcal{I}_2\}).
\end{align*}
Following the discussion in Section~\ref{sec:test-independence}, we conclude that under the null hypothesis $H_0:P = P_X\otimes P_Y$, $\rho^{\mathrm{split}}(\widehat{f}_{X, S_1}, \widehat{f}_{Y, S_1})$ has the same distribution as the Spearman's rank correlation based on sample size $|\mathcal{I}_2|$.

In comparison to the test statistic in Section~\ref{sec:test-independence}, we can now use transformations $\widehat{f}_{X,S_1}$ and $\widehat{f}_{Y, S_1}$ that can depend on $(X_i, Y_i), i\in S_1$ jointly not just marginally. We describe a specific example below. Based on the first split $S_1$ of the data, obtain transformations $\widehat{f}_{X, S_1}(\cdot)$ and $\widehat{f}_{Y, S_1}(\cdot)$ that maximize the ``correlation'' between $X_i, Y_i$ for $i\in S_1$; any technique can be used here and the ``correlation'' measure is also arbitrary. See~\cite{breiman1985estimating} for an example and one can also mix this methodology with dimension reduction techniques~\citep{ma2013efficient}. These transformations can be used in the statistic $\rho^{\mathrm{split}}(\widehat{f}_{X, S_1}, \widehat{f}_{Y, S_1})$ above. The critical values for this statistic can be obtained as before.

Summarizing the discussion in Section~\ref{SEC:RANK-TESTS}, we have shown that the distribution-free nature of the rank tests continues to hold under a large class of data-driven transformations. In all these sections, we have described the procedures only through two classical tests: Wilcoxon rank-sum test and Spearman's rank correlation test. Because most rank tests only depend on the fact that ranks are distributed uniformly over all permutations, the procedures can also be used with other rank tests~\citep{hollander2013nonparametric}.
%%%%%%%%%%%%%%%%%%%%%%%%%%%%%%%%%%%%%%%%%%%%%
%%%%%%%%%%%%%%%%%%%%%%%%%%%%%%%%%%%%%%%%%%%%%
\section{Summary and Concluding Remarks}\label{sec:summary}
We have described the fundamental concept of exchangeability and its implications for prediction regions as well as rank tests. By describing the basic components, the intention is to bring the conformal prediction more into practice and also to show the wide range of flexibility hiding within the rank tests. In both these topics, we have (intentionally) not done an in-depth survey of the existing literature. {\clr We encourage the reader to refer to the cited literature to explore these topics further. } 

Of course, in both cases (prediction and testing), it is also of interest to understand the ``power''. For a prediction region, this could be the length/volume of the region and for a test, it is the usual power (1 $-$ type II error). In the case of conformal prediction, we discussed the imitation of the optimal volume prediction region but it should be stressed that, in general, optimality is hard to attain in finite samples in a distribution-free way because {\clr it requires the transformation used in practice to be the optimal transformation and in general, one can only consistently estimate that optimal transformation under ``smoothness'' assumptions. See~\cite{MR3174619},~\cite{gyofi2020nearest}, and~\cite{yang2021finite} for some optimality results.}

In the case of testing, the optimal transformation for the equality of distribution testing would require estimation of the optimal distribution separating transformation. For instance, suppose $P_X$ and $P_Y$ are two distributions on $\mathbb{R}^d$ and in truth, they differ in their distributions only in the first coordinate. Then the optimal transformation to use is $x\in\mathbb{R}^d\mapsto x_1\in\mathbb{R}$. Among all the tests of the form suggested in Pseudocode 3, the optimal transformation should converge to $x\mapsto x_1$ asymptotically for optimality in this class of tests.  As with conformal prediction, this can be hard because of distributional assumptions and the curse of dimensionality in estimating the optimal transformation and also because the full-data transformation $\widehat{f}_{n+m}$ is not allowed to use the true $X$, $Y$ labels of $Z_1, \ldots, Z_{n+m}$ which makes it an unsupervised problem. The second issue can be alleviated by sample splitting.%We will revisit the second issue later.

% Finally, we stress that validity should always come \emph{before} optimality.

In light of the discussion here, we now briefly mention a few open questions. Firstly, regarding conformal prediction, we have focused on the split conformal method for computational efficiency. This method uses one part of the data for training and the other part for calibrating the prediction region. {\clr As mentioned, it has been argued in the literature that the split conformal method could incur statistical inefficiency due to this splitting.} Because prediction regions (unlike confidence regions) do not shrink to a singleton, it is not clear how to characterize this statistical inefficiency. The results of~\cite{lei2018distribution} and~\cite{sesia2019comparison} already prove that, under certain assumptions, the split conformal regions can ``converge'' to the optimal prediction region. In this sense, asymptotic volume optimality holds in general but to understand the sub-optimality stemming from splitting, we need refined results. We believe it to be an open question on how these refined results look. Secondly, related to rank tests, we introduced sample splitting as a way of avoiding restrictions on the data-driven transformations. There is, however, a trade-off in that sample splitting tests are only based on a fraction of the total sample size and hence can also sacrifice power. It would be interesting to understand if there is a way to {\clr improve power} and make use of data more cleverly; this {\clr could be done based in p-value combination techniques~\citep{vovkcombining} or the leave-one-out analogues~\citep{barber2019predictive}. Study of power gains of such procedures requires further exploration.}
Furthermore, it would be interesting to study the (asymptotic) power properties of the tests discussed in Section~\ref{SEC:RANK-TESTS} when the data-driven transformations are assumed to be consistent (in suitable metric) to their targets.
%it would be interesting if it is possible to show all distribution-free tests are generated through transformations like in Theorems~\ref{thm:wilcoxon-exchangeable} and~\ref{thm:independence}. This relates to the result in the literature of conformal prediction~\citep[Proposition 1.3]{balasubramanian2014conformal} where it is shown that any finite sample valid prediction region under exchangeability can be made matched or improved by a conformal prediction region.
\section*{Acknowledgments}
The author thanks Richard Berk, Andreas Buja, Rohit Patra for reading the earlier versions of the article and providing constructive comments that led to an improvement in the exposition. The author is also grateful to the reviewers, the associate editor, and the editor for their comments that led to the improved presentation.%The author also thanks Dr. Bodhisattva Sen of Columbia University for helpful discussions that lead to the comments on optimal transport. 
\bibliography{AssumpLean}
\bibliographystyle{apalike}
\newpage
\appendix
%%%%%%%%%%%%%%%%%%%%%%%%%%%%%%%%%%%%%%%%%%%%%%%%%%%%%%%%
%%%%%%%%%%%%%%%%%%%%%%%%%%%%%%%%%%%%%%%%%%%%%%%%%%%%%%%%
\section{Auxiliary Results}\label{appsec:auxiliary-results}
The following two results follow from Theorem~\ref{thm:transformations-exchangeability} and will play an important role for both conformal prediction and rank tests. 
\begin{prop}\label{prop:split-exchangeable}
Suppose $Z_1, \ldots, Z_n, Z_{n+1}$ are exchangeable random variables. If $1 \le n_1 \le n$, and $\widehat{T}_{n_1} := g(Z_1, \ldots, Z_{n_1})$ is any statistic computed based only on $Z_1, \ldots, Z_{n_1}$, then for any function $\widehat{f}_{n_1}$ depending arbitrarily on $Z_1, \ldots, Z_{n_1}$,
\[
\widehat{f}_{n_1}(Z_{n_1+1}, \widehat{T}_{n_1}),\,\widehat{f}_{n_1}(Z_{n_1 + 2}, \widehat{T}_{n_1}),\,\ldots,\, \widehat{f}_{n_1}(Z_{n}, \widehat{T}_{n_1}),\, \widehat{f}_{n_1}(Z_{n+1}, \widehat{T}_{n_1}),
\]
are exchangeable random variables.
\end{prop}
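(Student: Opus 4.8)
The plan is to cast the displayed map as a single deterministic transformation $G:\mathcal{W}^{n+1}\to(\mathcal{W}')^m$ with $m := n+1-n_1$, and then invoke the sufficiency direction of Theorem~\ref{thm:transformations-exchangeability}. Concretely, I would regard both the statistic $\widehat{T}_{n_1}$ and the function $\widehat{f}_{n_1}$ as quantities reconstructed from the first $n_1$ coordinates of the input, so that for $w=(w_1,\ldots,w_{n+1})$ the $j$-th output coordinate is
\[
G(w)_j ~=~ \widehat{f}_{n_1}\bigl(w_{n_1+j},\,g(w_1,\ldots,w_{n_1})\bigr),\qquad 1\le j\le m,
\]
where the function $\widehat{f}_{n_1}$ is itself determined by $w_1,\ldots,w_{n_1}$. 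Packaging everything as a pointwise deterministic map is what lets me apply condition~\eqref{eq:permutation-condition}, which is required to hold for all $w$.

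Next I would verify the permutation condition. Given an arbitrary permutation $\pi_1:[m]\to[m]$ of the output coordinates, define $\pi_2:[n+1]\to[n+1]$ to be the identity on $\{1,\ldots,n_1\}$ and to act as $\pi_1$ on the remaining block, i.e.\ $\pi_2(n_1+j)=n_1+\pi_1(j)$ for $1\le j\le m$. Because $\pi_2$ fixes the first $n_1$ indices, both the summary $g(w_1,\ldots,w_{n_1})$ and the data-dependent function $\widehat{f}_{n_1}$ are left unchanged when $G$ is evaluated at $\pi_2 w$ instead of $w$. A short index chase then gives $G(\pi_2 w)_j=\widehat{f}_{n_1}(w_{n_1+\pi_1(j)},g(w_1,\ldots,w_{n_1}))$, which is exactly the $j$-th coordinate of $\pi_1 G(w)$. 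Hence $\pi_1 G(w)=G(\pi_2 w)$ for every $w$, and Theorem~\ref{thm:transformations-exchangeability} yields exchangeability of $G(Z_1,\ldots,Z_{n+1})$, which is the asserted conclusion.

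The substantive step is not the index bookkeeping but the initial reframing: one must fold the ``data-dependent'' objects $\widehat{T}_{n_1}$ and $\widehat{f}_{n_1}$ into a genuinely deterministic $G$ whose dependence on the first $n_1$ arguments is internal, so that the invariance of these objects under permutations fixing $\{1,\ldots,n_1\}$ can be read off directly. Once that is set up, the only design choice is to build $\pi_2$ so that it never mixes the first $n_1$ coordinates with the rest --- this is precisely what preserves $\widehat{T}_{n_1}$ and $\widehat{f}_{n_1}$ while realizing the prescribed permutation of the remaining block. I expect no genuine obstacle beyond being careful that the dependence of $\widehat{f}_{n_1}$ on $Z_1,\ldots,Z_{n_1}$ is handled by the same mechanism, which it is, because $\pi_2$ leaves that block pointwise fixed.
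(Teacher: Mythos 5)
Your proposal is correct and follows essentially the same route as the paper: both define the block map $G$ on all $n+1$ coordinates (folding the data-dependence of $\widehat{T}_{n_1}$ and $\widehat{f}_{n_1}$ into a deterministic function of the first $n_1$ inputs), lift any permutation of the $m=n+1-n_1$ output coordinates to a permutation of $[n+1]$ that fixes $\{1,\ldots,n_1\}$ pointwise, and invoke the sufficiency direction of Theorem~\ref{thm:transformations-exchangeability}. Your write-up is, if anything, slightly more explicit than the paper's about why the frozen first block keeps $\widehat{f}_{n_1}$ and $\widehat{T}_{n_1}$ unchanged, which the paper dispatches in a single remark.
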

\begin{proof}
Define the function
\[
G\begin{pmatrix}Z_1\\Z_2\\\vdots\\ Z_{n+1}\end{pmatrix} ~:=~ \begin{pmatrix}\widehat{f}_{n_1}(Z_{n_1+1}, \widehat{T}_{n_1})\\ \vdots \\ \widehat{f}_{n_1}(Z_{n+1}, \widehat{T}_{n_1})\end{pmatrix}.
\]
For any permutation $\pi:\{n_1+1,\ldots,n+1\}\to\{n_1+1, \ldots, n+1\}$,
\[
\pi G\begin{pmatrix}Z_1\\Z_2\\\vdots\\ Z_{n+1}\end{pmatrix} = \begin{pmatrix}\widehat{f}_{n_1}(Z_{\pi(n_1+1)}, \widehat{T}_{n_1})\\\cdots\\ \widehat{f}_{n_1}(Z_{\pi(n+1)}, \widehat{T}_{n_1})\end{pmatrix} = G\begin{pmatrix}Z_1\\\vdots\\Z_{n_1}\\\pi\begin{pmatrix}Z_{n_1+1}\\\vdots\\Z_{n+1}\end{pmatrix}\end{pmatrix} \overset{(a)}{=} G\begin{pmatrix}\pi_1\begin{pmatrix}Z_1\\Z_2\\\vdots\\Z_{n+1}\end{pmatrix}\end{pmatrix},
\]
for a permutation $\pi_1:[n+1]\to[n+1]$, where
\[
\pi_1(i) = i\quad\mbox{for all}\quad 1\le i\le n_1\quad\mbox{and}\quad \pi_i(i) = \pi(i)\quad\mbox{for}\quad i > n_1.
\] 
Equality (a) above follows from the fact that $\widehat{f}_{n_1}$ depends only on $Z_1, \ldots, Z_{n_1}$. Hence Theorem~\ref{thm:transformations-exchangeability} implies the result.
\end{proof}
\begin{prop}\label{prop:permutation-invariant-exchangeable}
Suppose $Z_1, \ldots, Z_n$ are exchangeable, and $\widehat{f}$ is a function depending on $Z_1, \ldots, Z_n$ permutation invariantly.\footnote{This means that the algorithm outputting $\widehat{f}$ does not use the indexing of $Z_1, \ldots, Z_{n}$.} Then $\widehat{f}(Z_1)$, $\ldots$, $\widehat{f}(Z_n)$ are exchangeable. 
\end{prop}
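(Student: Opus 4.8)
The plan is to reduce Proposition~\ref{prop:permutation-invariant-exchangeable} to the sufficient condition in Theorem~\ref{thm:transformations-exchangeability}, exactly as was done for Proposition~\ref{prop:split-exchangeable}. Write $\widehat{f}(z) = \widehat{f}(z;\, Z_1, \ldots, Z_n)$ to make the (permutation-invariant) dependence on the data explicit, and define the map $G:\mathcal{W}^n\to\mathbb{R}^n$ by
\[
G(w_1, \ldots, w_n) ~:=~ \big(\widehat{f}(w_1;\, w_1, \ldots, w_n),\, \ldots,\, \widehat{f}(w_n;\, w_1, \ldots, w_n)\big),
\]
so that $G(Z_1, \ldots, Z_n) = (\widehat{f}(Z_1), \ldots, \widehat{f}(Z_n))$. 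Here $m = n$, so the condition~\eqref{eq:permutation-condition} asks, for each permutation $\pi_1:[n]\to[n]$, for a permutation $\pi_2:[n]\to[n]$ with $\pi_1 G(w) = G(\pi_2 w)$ for all $w$. I would simply take $\pi_2 = \pi_1$.

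The key step is then a coordinatewise verification. Fix $\pi_1$ and $w = (w_1, \ldots, w_n)$. The $i$-th coordinate of $G(\pi_1 w) = G(w_{\pi_1(1)}, \ldots, w_{\pi_1(n)})$ is $\widehat{f}(w_{\pi_1(i)};\, w_{\pi_1(1)}, \ldots, w_{\pi_1(n)})$. Since $\widehat{f}$ depends on its data arguments permutation invariantly, the reshuffling of the data slots $w_{\pi_1(1)}, \ldots, w_{\pi_1(n)}$ back to $w_1, \ldots, w_n$ leaves the value unchanged, so this equals $\widehat{f}(w_{\pi_1(i)};\, w_1, \ldots, w_n)$. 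On the other hand, the $i$-th coordinate of $\pi_1 G(w)$ is the $\pi_1(i)$-th coordinate of $G(w)$, namely $\widehat{f}(w_{\pi_1(i)};\, w_1, \ldots, w_n)$. These agree for every $i$, so $\pi_1 G(w) = G(\pi_1 w)$, establishing~\eqref{eq:permutation-condition}. Theorem~\ref{thm:transformations-exchangeability} then yields that $G(Z_1, \ldots, Z_n)$ is exchangeable, which is the claim.

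The only genuine subtlety — and where care is needed rather than real difficulty — is bookkeeping the two distinct roles of the arguments of $\widehat{f}$: the \emph{evaluation point} (first slot), which is genuinely permuted along with the indices, versus the \emph{data slots} (remaining arguments), on which $\widehat{f}$ is invariant. It is precisely the permutation invariance hypothesis that lets the permuted data slots be restored, so that the transformed sequence is a permutation of the original rather than an unrelated vector. No regularity or measurability beyond that used in Theorem~\ref{thm:transformations-exchangeability} is required, and the argument is purely combinatorial once $G$ is set up correctly.
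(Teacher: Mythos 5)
Your proposal is correct and follows essentially the same route as the paper: both define the map $G$ sending $(Z_1,\ldots,Z_n)$ to $(\widehat{f}(Z_1),\ldots,\widehat{f}(Z_n))$, verify condition~\eqref{eq:permutation-condition} with $\pi_2=\pi_1$ using the permutation invariance of $\widehat{f}$ in its data arguments, and invoke Theorem~\ref{thm:transformations-exchangeability}. Your write-up is in fact somewhat more careful than the paper's, since you explicitly separate the evaluation slot from the data slots --- a bookkeeping step the paper's one-line justification leaves implicit.
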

This result is same as Proposition 2.1 of~\cite{vovk2019testing}.
\begin{proof}[Proof of Proposition~\ref{prop:permutation-invariant-exchangeable}]
Like in Proposition~\ref{prop:split-exchangeable}, we apply Theorem~\ref{thm:transformations-exchangeability}. Define the function
\[
G\begin{pmatrix}Z_1\\\vdots\\Z_n\end{pmatrix} ~:=~ \begin{pmatrix}\widehat{f}(Z_1)\\\vdots\\\widehat{f}(Z_n)\end{pmatrix}.
\]
For any permutation $\pi:[n]\to[n]$,
\[
\pi G\begin{pmatrix}Z_1\\\vdots\\Z_n\end{pmatrix} = \begin{pmatrix}\widehat{f}(Z_{\pi(1)})\\\vdots\\\widehat{f}(Z_{\pi(n)})\end{pmatrix} = G\begin{pmatrix}Z_{\pi(1)}\\\vdots\\Z_{\pi(n)}\end{pmatrix}.
\]
The first equality above holds because $\widehat{f}$ depends on $Z_1, \ldots, Z_n$ permutation invariantly. The result now follows from Theorem~\ref{thm:transformations-exchangeability}.
\end{proof}
\section{Proofs of Results in Sections~\ref{SEC:EXCHANGEABILITY-IMPLICATIONS} and~\ref{SEC:CONFORMAL-PREDICTION}}\label{appsec:EXCHANGEABILITY-IMPLICATIONS}
\subsection{Proof of Theorem~\ref{thm:rank-distribution}}
Fix $\xi > 0$ and set $Z_i = W_i + \xi U_i$. Because $U_1, \ldots, U_n$ continuously distributed and are distinct with probability 1, we have that the events
\[
\{Z_{\pi(1)} \le Z_{\pi(2)} \le \cdots \le Z_{\pi(n)}\},
\]
over all permutations $\pi:[n]\to[n]$, are disjoint and further, one of them has to have occurred. Hence
\begin{equation}\label{eq:sum-to-one}
\sum_{\pi:[n]\to[n]}\mathbb{P}\left(Z_{\pi(1)} \le \cdots \le Z_{\pi(n)}\right) = 1.
\end{equation}
Because $U_1, \ldots, U_n$ are iid and $W_1, \ldots, W_n$ are exchangeable, $Z_1, \ldots, Z_n$ are exchangeable. This, by definition, implies that
\[
(Z_1, \ldots, Z_n) ~\sim~ (Z_{\pi(1)}, \ldots, Z_{\pi(n)}),
\]
for any permutation $\pi:[n]\to[n]$ and taking $A := \{(x_1, \ldots, x_n):\,x_1 \le \ldots \le x_n\}$ yields
\begin{align*}
\mathbb{P}(Z_1 \le \ldots \le Z_n) &= \mathbb{P}((Z_1, \ldots, Z_n)\in A)\\
&= \mathbb{P}((Z_{\pi(1)}, \ldots, Z_{\pi(n)}) \in A)\\
&= \mathbb{P}(Z_{\pi(1)} \le \ldots \le Z_{\pi(n)}).
\end{align*}
This combined with~\eqref{eq:sum-to-one} proves that for every permutation $\pi$,
\[
\mathbb{P}\left(Z_{\pi(1)} \le \cdots \le Z_{\pi(n)}\right) ~=~ \frac{1}{n!}.
\]
This proves the result because $\{Z_{\pi(1)} \le \ldots \le Z_{\pi(n)}\}$ is equivalent to the event that $(\mathrm{rank}(Z_i):i\in[n])$ is a particular permutation of $[n]$.
\subsection{Proof of Corollary~\ref{cor:for-conformal-prediction}}
Because $\mathrm{rank}(\cdot;\,\cdot)$ takes values in $\{1, 2, \ldots, n\}$, we get that
\begin{align*}
\mathbb{P}\bigg(\mathrm{rank}(W_n;\,\{W_1, \ldots, W_n\}) \le t\bigg) &= \mathbb{P}\bigg(\mathrm{rank}(W_n;\,\{W_1, \ldots, W_n\}) \le \lfloor t\rfloor\bigg)\\
&= \sum_{i=1}^{\lfloor t\rfloor} \mathbb{P}\bigg(\mathrm{rank}(W_n;\,\{W_1, \ldots, W_n\}) = i\bigg)\\ 
&= \sum_{i=1}^{\lfloor t\rfloor}\frac{(n-1)!}{n!},
\end{align*}
which proves the result.
\subsection{Proof of Theorem~\ref{thm:transformations-exchangeability}}\label{appsec:commenges}
Fix the transformation $G:\mathcal{W}^n\to(\mathcal{W}')^m$ and a vector $W$ of exchangeable random variables. Suppose for each permutation $\pi_1:[m]\to[m]$, there exists a permutation $\pi_2:[n]\to[n]$ such that 
\begin{equation}\label{eq:commenges-condition}
\pi_1G(w) = G(\pi_2w),\quad\mbox{for all}\quad w\in\mathcal{W}^n.
\end{equation}
Then for any permutation $\pi_1:[m]\to[m]$, the distribution of $\pi_1G(W)$ is same as the distribution of $G(\pi_2W)$. Because $W$ is a vector of exchangeable random variables, $\pi_2W$ has the same distribution as $W$. Therefore, $G(\pi_2W)$ (and hence, $\pi_1G(W)$) has the same distribution as $G(W)$. This completes the proof of exchangeability of the vector $G(W)$ of random variables. Hence the transformation $G$ preserves exchangeability.

To prove the second part, suppose $G:\mathcal{W}^n\to(\mathcal{W}')^m$ is a transformation preserving exchangeability. This means that whenever $W\in\mathcal{W}^n$ is a vector of exchangeable random variables, $G(W)$ is also a vector of exchangeable random variables in $(\mathcal{W}')^m$. Fix a (non-random) vector $w\in\mathcal{W}^n$ and define a random vector $W\in\mathcal{W}^n$ via the distribution
\[
\mathbb{P}(W = \pi_2'w) = \frac{1}{n!},\quad\mbox{for all permutations}\quad \pi_2':[n]\to[n]. 
\]
The idea for this distribution comes from the proof of Theorem 4 in~\cite{dean1990linear}.
Random vector $W$ has a uniform distribution on the set of all permutations of $w\in\mathcal{W}^n$. It is easy to verify that $W$ is a vector of exchangeable random variables. Because, by assumption, $G:\mathcal{W}^n\to(\mathcal{W}')^m$ is an exchangeability preserving transformation, we get that $G(W)$ is also a vector of exchangeable random variables. In particular, the support of $\pi_1G(W)$ remains constant over all permutations $\pi_1:[m]\to[m]$. Note that the support of $G(W)$ is
\[
\{G(\pi_2'w):\, \pi_2'\mbox{ a permutation on }[n]\}.
\]
Therefore, for all permutations $\pi_1:[m]\to[m]$,
\begin{equation}\label{eq:equivalence-of-support}
\begin{split}
&\pi_1\{G(\pi_2'w):\,\pi_2'\mbox{ a permutation on }[n]\}\\ 
~&\quad=~ \{G(\pi_2w):\,\pi_2\mbox{ a permutation on }[n]\}.
\end{split}
\end{equation}
Note that $\pi_1G(w)$ is an element in the set on the left hand side of~\eqref{eq:equivalence-of-support}. This implies that $\pi_1G(w)$ is equal to $G(\pi_2w)$ for some permutation $\pi_2:[n]\to[n]$. In other words, for all $w\in\mathcal{W}^n$ and all permutations $\pi_1:[m]\to[n]$, there exists a permutation $\pi_2:[n]\to[n]$ (possibly depending on $w$) such that $\pi_1G(w)=G(\pi_2w)$. The final part about necessary and sufficient condition for linear transformations follows from Theorem 4 of~\cite{dean1990linear}. This completes the proof.
% \section{Proofs of Results in Section~\ref{SEC:CONFORMAL-PREDICTION}}\label{appsec:conformal-prediction}
\subsection{Proof of Theorem~\ref{thm:main-split-conformal}}
Proposition~\ref{prop:split-exchangeable} proves that $\widehat{f}_{\mathcal{T}}(Z_{n_1+1}), \ldots, \widehat{f}_{\mathcal{T}}(Z_{n+1})$ are exchangeable. The result now follows from Proposition~\ref{prop:conformal-prediction-main}.
\end{document}